\documentclass[letterpaper, 10 pt, conference]{ieeeconf}  

\IEEEoverridecommandlockouts                              
\usepackage{epsfig} 
\usepackage{mathptmx} 
\usepackage{times} 
\usepackage{amsmath} 
\usepackage{amssymb}  
\usepackage{bm}
\usepackage{comment}
\usepackage{booktabs}
\usepackage{array}
\usepackage[T1]{fontenc}
\usepackage{pgf}
\usepackage{tikz}
\usepackage{tikz-3dplot}
\usepackage{graphicx} 

\usepackage{amsthm}
\newtheorem{theorem}{Theorem}[section]
\newtheorem{corollary}{Corollary}[section]
\newtheorem{remark}{Remark}[section]
\newtheorem{lemma}{Lemma}[section]
\newtheorem{proposition}{Proposition}[section]
\theoremstyle{definition}
\newtheorem{definition}{Definition}[section]

\usepackage[
  colorlinks=true,
  linkcolor=blue!70!black,
  citecolor=green!50!black,
  urlcolor=blue!80!black,
  bookmarks=true,
  bookmarksnumbered=true,
]{hyperref}
\usepackage[nameinlink]{cleveref}

\DeclareMathOperator{\vol}{\mathrm{vol}}

\title{\LARGE \bf
    BLISS: Global Blind Identification of Linear Systems \\ with Sparse Inputs
}

\author{Kyle Poe, Uday Kiran Reddy Tadipatri, Benjamin D. Haeffele, and Ren\'e Vidal
 \thanks{This work was supported by ONR MURI grant 503405-78051 and University of Pennsylvania startup funds.}
 \thanks{The authors are with the Center for Innovation in Data Engineering and Science, University of Pennsylvania. 3333 Chestnut St., Philadelphia PA 19104, USA.
         {\tt\small kpoe@sas.upenn.edu}}%
}

\begin{document}

\maketitle
\thispagestyle{empty}
\pagestyle{empty}

\begin{abstract}

    Linear system identification and sparse dictionary learning can both be seen as structured matrix factorization problems. However, these two problems have historically been studied in isolation by the systems theory and machine learning communities. Although linear system identification enjoys a mature theory when inputs are known, blind linear system identification remains poorly understood beyond restrictive settings. In contrast, complete sparse dictionary learning has recently benefited from strong global identifiability results and scalable nonconvex algorithms. In this work, we bridge these two areas by showing that under a sparse input assumption, fully observed blind system identification becomes a generalization of complete dictionary learning. This connection allows us to develop global identifiability guarantees for blind system identification, by leveraging techniques from the complete dictionary learning literature. We further show empirically that a principled application of the alternating direction method of multipliers can globally recover the ground-truth system from a single trajectory, provided sufficient samples and input sparsity.
    
\end{abstract}

\section{Introduction}
System identification--the problem of learning the parameters of a dynamical system from observed input-output data--remains a central problem in systems and control. Given state observations $(\bm{x}^{(0)}, \cdots, \bm{x}^{(T)})$ and inputs $(\bm{u}^{(0)}, \cdots, \bm{u}^{(T-1)})$ of a ground truth fully-observed linear time invariant (LTI) system
\begin{equation}\label{eq:LTI_eqs}
    \bm{x}^{(k+1)} = \bm{A}_\natural\bm{x}^{(k)} + \bm{B}_\natural\bm{u}^{(k)},\quad \bm{A}_\natural \in \mathbb{R}^{n \times n}, \quad \bm{B}_\natural \in \mathbb{R}^{n \times m},
\end{equation}
we may define the matrices $\bm{X}_+ = [\bm{x}^{(1)}, \cdots, \bm{x}^{(T)}] \in \mathbb{R}^{n \times T}$, $\bm{X} = [\bm{x}^{(0)}, \cdots, \bm{x}^{(T-1)}]^{\top}\in\mathbb{R}^{T \times n}$, and $\bm{U} = [\bm{u}^{(0)}, \cdots, \bm{u}^{(T-1)}]^\top \in \mathbb{R}^{T \times m}$, so that the identification problem can be written as a linear system where one aims to \textit{determine} $\bm{A}_\star, \bm{B}_\star$ such that 
\begin{equation}   
    \bm{X}_+ = \bm{A}_\star\bm{X}^\top + \bm{B}_\star\bm{U}^\top.
\end{equation}

When $\bm{U}$ is known, this problem is well-understood--the necessary and sufficient conditions for unique recovery of $(\bm{A}_\star, \bm{B}_\star) = (\bm{A}_\natural, \bm{B}_\natural)$ is that the matrix $[\bm{X}, \bm{U}]^\top \in \mathbb{R}^{(n +m) \times T}$ have full row rank \cite{willems-et-al-scl05}. Even in the presence of noise and when only indirect state measurements $\bm{y}^{(k)} = \bm{C}\bm{x}^{(k)}$ are available, assuming that $\bm{A}_\natural, \bm{B}_\natural, \bm{C}_\natural$ are a minimal realization, recent work has established sharp non-asymptotic guarantees for recovering the Markov parameters via a simple least squares estimator \cite{finite_sarkar_2021, nonasymptotic_oymak_2018}, by which these matrices can be uniquely recovered up to a similarity transformation. Other recent advances include enhanced guarantees for convex programming under sparsity assumptions on the system matrices \cite{learning_fattahi_2019, sample_fattahi_2021}, and globally certifiable methods for directly learning a minimal-order realization via nonconvex optimization \cite{tadipatri2025nonconvex}.

However, in many cases of practical interest, the inputs $\bm{U}$ may not be known, but possess additional structure such as being \textit{sparse}.  Examples in the natural sciences range among diffusion processes \cite{monge2020sparse}, activation sparsity in biological neural networks \cite{vinje2000sparse}, and seismic signal processing \cite{taylor1979deconvolution}; while examples in the information sciences include fault detection \cite{gajjar2018real}, actuator scheduling \cite{siami2020deterministic}, and neural network pruning \cite{hoefler2021sparsity}, to name only a few. In such cases when the system of interest is known, a variety of recent work has established improved algorithms and conditions for sparse input recovery \cite{chakraborty2023joint, poe2023necessary, poe2024invertibility}, delivering superior guarantees and practical performance.

In this work, we aim to develop global recovery guarantees for such \textit{blind} system identification problems, where in addition to recovering $\bm{A}_\natural, \bm{B}_\natural$, we also aim to recover an unknown, sparse $\bm{U}_\natural$. In this case, the problem becomes: given $\bm{X}_+, \bm{X}$, recover $\bm{A}_\star, \bm{B}_\star, \bm{U}_\star$ such that
\begin{equation}
    \text{$\bm{X}_+ = \bm{A}_\star\bm{X}^\top + \bm{B}_\star\bm{U}_\star^\top$ and $\bm{U}_\star$ is sparse.}
\end{equation}
If $\bm{A}_\natural$ was known, this problem would be equivalent to factorizing an innovation matrix $\bm{R} := \bm{X}_+ - \bm{A}_\natural\bm{X}^\top$ as $\bm{B}_\star\bm{U}_\star^\top$, which is exactly \textit{sparse dictionary learning}, a classical problem in signal processing \cite{tovsic2011dictionary}. Despite the intrinsic nonconvexity of this problem, global identifiability up to scaling and permutation of the columns of the $(\bm{B}_\natural, \bm{U}_\natural)$ matrices in this setting has been established for several different optimization formulations \cite{zhai2020complete, sun2015complete, hu2023global}. Of particular note is the matrix volume optimization paradigm put forward by Hu et al. \cite{hu2023global}, which is unique among existing approaches in that it provides interpretable conditions for \textit{exact} and simultaneous identifiability of the entire dictionary. 
However, in the blind system identification context, $\bm{R}$ can vary due to the unknown $\bm{A}$, so previous results do not apply.

Prior work on joint recovery of $(\bm{A}_\natural, \bm{B}_\natural, \bm{U}_\natural)$ under sparse input assumptions is itself quite sparse. Relevant existing work includes finite impulse response recovery using higher-order-statistics \cite{algorithms_demoor_1999}, single-input single-output deconvolution \cite{blind_ahmed_2014}, and a single paper on recovery of a linear time varying system from state observations in the presence of unknown sparse inputs, when $\bm{B}_\natural = \bm{I}$ \cite{dobbe2019blind}. 
To the best of our knowledge, this work establishes the first global identifiability results for general, unstructured linear systems under unknown sparse inputs.
Our contributions are as follows:

\begin{enumerate}
    \item We formulate the blind system identification problem in the fully observed setting as a constrained nonconvex volume maximization problem.
    \item We establish global identifiability of the ground truth system for this problem under the \textit{persistent scattering} condition, strengthening conditions of both dictionary learning and linear system identification.
    \item We introduce a generalized alternating direction method of multipliers (ADMM) for solving this problem, leveraging efficient closed form proximal updates with bounded iterates.
    \item Our theory is validated by experiments, demonstrating a clear phase transition between perfect, global recovery and failure, strictly improving on average with sparser inputs and larger $T$.
\end{enumerate}

\subsection{Notation}
For an integer $K>0$, we denote $[K]:= \{0, 1, \cdots, K-1\}$. We denote the $\ell_p$ norm by $\|\cdot\|_p$ for $p \in [1, \infty]$, and the closed unit $\ell_p$ ball by $\mathcal{B}_p:= \{\bm{x}: \|\bm{x}\|_p \le 1\}$, where the ambient dimension is inferred from context. We use parenthesized superscripts to index collections of matrices/vectors, and subscripts to denote their columns/entries, such that $\bm{A}^{(k)}\bm{x}^{(k)} = \sum_{i} \bm{A}^{(k)}_i \bm{x}^{(k)}_i$. For a matrix $\bm{B} \in \mathbb{R}^{n \times m}$ with $n \ge m$, we write $\vol(\bm{B}) := \sqrt{\det(\bm{B}^\top \bm{B})}$, and for a square matrix $\bm{\Phi}$ denote $|\bm{\Phi}| := |\det(\bm{\Phi})|$. $\bm{A}^\dagger$ denotes the Moore-Penrose pseudoinverse of $\bm{A}$. We denote the element-wise (Hadamard) product between vectors with $\odot$. For convex analysis, we follow the conventions of Rockafellar \cite{rockafellar1997convex}.

\section{Background}
\subsection{Fully Observed Linear System Identification}\label{sect:sysid}
Linear system identification may be understood as the problem of recovering a minimal system realization from a sequence of state measurements $(\bm{y}^{(k)})_{k \in [T]}$ and inputs $(\bm{u}^{(k)})_{k \in [T]}$. When such measurements uniquely constrain the states, we refer to the system as \textit{fully observed}, and the problem reduces to the recovery of $\bm{A}_\natural, \bm{B}_\natural$ from $(\bm{x}^{(k)}, \bm{u}^{(k)})_{k \in [T]}$, where $\bm{x}^{(k+1)} = \bm{A}_\natural \bm{x}^{(k)} + \bm{B}_\natural \bm{u}^{(k)}$. In this case, the necessary and sufficient condition for unique recovery of $\bm{A}_\natural, \bm{B}_\natural$ is that the matrix $[\bm{X}, \bm{U}] \in \mathbb{R}^{T \times (n+m)}$ be full column rank. This condition is sometimes referred to as $(\bm{x}, \bm{u})$ being \textit{persistently exciting}, in light of the following characterization:
\begin{corollary}[Willems \cite{willems-et-al-scl05}]
    Suppose the system $(\bm{A}_\natural, \bm{B}_\natural)$ is controllable, and that $\bm{u}$ is \textit{persistently exciting} of order $n+1$:
    \begin{equation}
        \text{rank}\begin{bmatrix}
            \bm{u}^{(0)} & \bm{u}^{(1)} & \cdots & \bm{u}^{(T-n-1)} \\
             \vdots & \vdots & & \vdots\\
             \bm{u}^{(n)} & \bm{u}^{(n+1)} & \cdots & \bm{u}^{(T-1)}
        \end{bmatrix} = (n+1)m
    \end{equation}
    Then $\text{rank}\,[\bm{X}, \bm{U}] = n + m$.
    
\end{corollary}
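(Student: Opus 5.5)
This is the classical argument of Willems' fundamental lemma, specialized to depth $1$. We argue by contradiction in the left-kernel formulation. Suppose $\text{rank}\,[\bm{X}, \bm{U}] < n+m$. Then there is a nonzero row vector $[\bm{\xi}^\top, \bm{\eta}^\top] \in \mathbb{R}^{1\times(n+m)}$ annihilating every stacked column $[\bm{x}^{(k)}; \bm{u}^{(k)}]$ for $k \in [T]$:
\begin{equation}
\bm{\xi}^\top \bm{x}^{(k)} + \bm{\eta}^\top \bm{u}^{(k)} = 0, \qquad k = 0,\dots,T-1.
\end{equation}
The goal is to manufacture from this a nonzero vector in the left kernel of the depth-$(n+1)$ block Hankel matrix of $\bm{u}$. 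That would contradict persistent excitation of order $n+1$, since the Hankel matrix has $(n+1)m$ rows and full row rank means its left kernel is trivial.

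\textbf{Key step (shifting the annihilator).} Define, for $j = 0,\dots,n$, the shifted row vectors $[\bm{\xi}^\top \bm{A}_\natural^{j}, \ \bm{\xi}^\top \bm{A}_\natural^{j-1}\bm{B}_\natural, \dots, \bm{\xi}^\top \bm{B}_\natural, \ \bm{\eta}^\top, 0, \dots, 0]$, acting on windows $[\bm{x}^{(k)}; \bm{u}^{(k)}; \dots; \bm{u}^{(k+n)}]$. Iterating $\bm{x}^{(k+j)} = \bm{A}_\natural^j \bm{x}^{(k)} + \sum_{i<j} \bm{A}_\natural^{j-1-i}\bm{B}_\natural \bm{u}^{(k+i)}$ shows that applying the annihilation identity at time $k+j$ gives exactly this vector annihilating every window $[\bm{x}^{(k)}; \bm{u}^{(k)}; \dots; \bm{u}^{(k+n)}]$, for $k = 0,\dots,T-n-1$. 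Collecting these $n+1$ vectors, we obtain $n+1$ left-kernel vectors $\bm{w}_0,\dots,\bm{w}_n$ of the matrix $\mathcal{H} = [\bm{X}_{0:T-n}; \mathcal{H}_{n+1}(\bm{u})]$, where $\bm{X}_{0:T-n}$ has columns $\bm{x}^{(0)}, \dots, \bm{x}^{(T-n-1)}$.

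\textbf{Eliminating the state.} The state parts of $\bm{w}_j$ are $\bm{\xi}^\top\bm{A}_\natural^j$, $j = 0,\dots,n$. These are $n+1$ vectors in $\mathbb{R}^n$, so by Cayley--Hamilton there is a nontrivial combination with coefficients $\alpha_j$ (those of the characteristic polynomial, $\alpha_n = 1$) making $\sum_j \alpha_j \bm{\xi}^\top\bm{A}_\natural^j = 0$. Hence $\bm{w} := \sum_j \alpha_j \bm{w}_j$ has zero state part and annihilates $\mathcal{H}_{n+1}(\bm{u})$. The contradiction follows once $\bm{w}$ is shown to be nonzero.

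\textbf{Main obstacle: nonvanishing of $\bm{w}$.} The $\bm{u}$-block of $\bm{w}$ in the last position ($\bm{u}^{(k+n)}$) is $\alpha_n \bm{\eta}^\top = \bm{\eta}^\top$. Earlier blocks involve $\bm{\eta}^\top$ and the terms $\bm{\xi}^\top \bm{A}_\natural^{i}\bm{B}_\natural$. If $\bm{\eta} \neq 0$ we are done. If $\bm{\eta} = 0$, then $\bm{\xi} \neq 0$, and the block multiplying $\bm{u}^{(k+n-1-i)}$ is $\sum_{j>i}\alpha_j \bm{\xi}^\top\bm{A}_\natural^{j-1-i}\bm{B}_\natural$. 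We handle this case by a triangular argument: if all these blocks vanish, descend from $i = n-1$ (coefficient $\bm{\xi}^\top\bm{B}_\natural$) downward to conclude successively that $\bm{\xi}^\top \bm{A}_\natural^{i}\bm{B}_\natural = 0$ for $i = 0,\dots,n-1$. This says $\bm{\xi}^\top$ annihilates the controllability matrix, contradicting controllability of $(\bm{A}_\natural,\bm{B}_\natural)$. The bookkeeping of indices in this triangular step is where care is needed. Since every case yields a contradiction, $[\bm{X},\bm{U}]$ has full column rank $n+m$.
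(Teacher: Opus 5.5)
The paper gives no proof of this corollary; it quotes the result from Willems et al.\ as background, so there is no in-paper argument to compare against. Your proof is correct. It is the standard left-kernel argument for the fundamental lemma at depth one: shift the annihilator $n$ times, remove the state component using the characteristic-polynomial coefficients with $\alpha_n=1$, then invoke persistency of excitation and controllability.

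There is one index slip in the bookkeeping you flagged. When $\bm{\eta}=0$, the block $\sum_{j>i}\alpha_j\bm{\xi}^\top\bm{A}_\natural^{j-1-i}\bm{B}_\natural$ is the coefficient of $\bm{u}^{(k+i)}$, not of $\bm{u}^{(k+n-1-i)}$. For example, $\bm{\xi}^\top\bm{B}_\natural$ multiplies $\bm{u}^{(k+n-1)}$, whereas $\bm{u}^{(k)}$ carries $\sum_{j\ge 1}\alpha_j\bm{\xi}^\top\bm{A}_\natural^{j-1}\bm{B}_\natural$. With that relabeling, your descent from $i=n-1$ to $i=0$ gives $\bm{\xi}^\top\bm{A}_\natural^{n-1-i}\bm{B}_\natural=0$ at step $i$, exactly as you intend.

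The case split on $\bm{\eta}$ is also unnecessary. Full row rank of the depth-$(n+1)$ Hankel matrix forces every input block of $\bm{w}$ to vanish. The last block then gives $\bm{\eta}=0$ directly, and the descent gives $\bm{\xi}=0$ by controllability.
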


\subsection{Identifiability of Complete Dictionary Learning}
    Sparse dictionary learning is the problem of approximately factorizing a given matrix $\bm{R}$ as $\bm{R} \approx \bm{\Psi} \bm{Z}^\top$ such that $\bm{Z}$ is sparse. When the matrix $\bm{\Psi}$ is sought to be full column rank, this is known as \textit{complete} dictionary learning. In the exact setting, when there exists a true sparse $\bm{Z}_\natural$ and $\bm{\Psi}_\natural$ satisfying $\bm{R} = \bm{\Psi}_\natural \bm{Z}_\natural^\top$, this problem may be solved through an optimization problem of the form
    \begin{equation}
        (\bm{\Psi}_\star, \bm{Z}_\star) \in \arg\min_{\bm{\Psi}, \bm{Z}} f_{\bm{R}}(\bm{\Psi}, \bm{Z}) \text{ s.t. } \bm{R} = \bm{\Psi}\bm{Z}^\top,
    \end{equation}
    where $f_{\bm{R}}$ is an objective function, possibly depending on $\bm{R}$, which promotes sparsity in $\bm{Z}$. Recent work has studied geometric conditions on the sparse representation matrix $\bm{Z}_\natural$ such that for a suitable $f_{\bm{R}}$, any such $(\bm{\Psi}_\star, \bm{Z}_\star)$ is equivalent to $(\bm{\Psi}_\natural, \bm{Z}_\natural)$ in the following sense:

    \begin{definition}[SP-Equivalent]
        We will write $(\bm{\Psi}, \bm{Z}) \cong (\bm{\Psi}', \bm{Z}')$ and say that the pairs are \textit{scale-permutation equivalent} when there exists a nonsingular diagonal matrix $\bm{D}$ and permutation matrix $\bm{\Pi}$ such that
        \begin{equation}
            (\bm{\Psi}', \bm{Z}') = (\bm{\Psi} \bm{D} \bm{\Pi}, \bm{Z} \bm{D}^{-1} \bm{\Pi})         
        \end{equation}
    \end{definition}
    
    Specifically, define the following notion for a set $\mathcal{C}$:
    
    \begin{definition}[Sufficient Scattering]
        We say that a subset $\mathcal{C}$ of the unit hypercube $\mathcal{B}_\infty \subset \mathbb{R}^{m}$ is \textit{sufficiently scattered} if
        \begin{equation}
            \mathcal{B}_2 \subseteq \mathcal{C}, \quad \partial\mathcal{B}_2 \cap \partial \mathcal{C} = \{\pm \bm{e}^{(i)}\}_{i \in [m]}.
         \end{equation}
    \end{definition}
    This condition geometrically captures the notion that the set $\mathcal{C}$ is in some sense \textit{spread out} enough such as to contain the unit sphere, and intersects it only at the unit vectors required by containment in the unit hypercube.
    Leveraging this characterization, the authors prove identifiability of the ground truth dictionary and sparse representation matrix under a constrained volume minimization problem:

    \begin{theorem}[Hu and Huang \cite{hu2023global}]
        Let $\bm{\Lambda}$ be a diagonal matrix such that $\bm{\Lambda}_{ii} = \|(\bm{Z}_\natural)_i\|_1$. If $\bm{R}:= \bm{\Psi}_\natural \bm{Z}_\natural^\top$ has full row rank and $\bm{\Lambda}^{-1}\bm{Z}_\natural^\top \mathcal{B}_\infty$ is sufficiently scattered, then for any $(\bm{\Psi}_\star, \bm{Z}_\star)$ such that
        \begin{equation}
            (\bm{\Psi}_\star, \bm{Z}_\star) \in \arg\min_{\bm{\Psi}, \bm{Z}} \vol(\bm{\Psi}) \text{ s.t. } \begin{cases}
                \bm{R} = \bm{\Psi}\bm{Z}^\top \\
                \max_{i \in [T]} \|\bm{Z}_i\|_1 \le 1,
            \end{cases}
        \end{equation}
        we have that $(\bm{\Psi}_\star, \bm{Z}_\star) \cong (\bm{\Psi}_\natural, \bm{Z}_\natural)$.
    \end{theorem}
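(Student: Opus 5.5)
Because $\bm{R}$ has full row rank and $\bm{R}=\bm{\Psi}\bm{Z}^\top$ forces $\bm{\Psi}$ to have full row rank as well, I would first reduce to a question about a single square matrix. In the complete setting $\bm{\Psi}_\natural$ is square or full column rank with the same range as $\bm{R}$. Any feasible $\bm{\Psi}$ then shares that range, so $\bm{\Psi}=\bm{\Psi}_\natural\bm{W}$ for a nonsingular $\bm{W}\in\mathbb{R}^{m\times m}$. Since $\bm{\Psi}_\natural$ has full column rank, it follows that $\bm{Z}^\top=\bm{W}^{-1}\bm{Z}_\natural^\top$. The objective becomes $\vol(\bm{\Psi})=\vol(\bm{\Psi}_\natural)|\bm{W}|$, so minimizing volume is the same as minimizing $|\bm{W}|$.

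Next I would rescale. Set $\bm{G}:=\bm{W}^{-1}\bm{\Lambda}$, so that $\bm{Z}^\top=\bm{G}\,\bm{\Lambda}^{-1}\bm{Z}_\natural^\top$. The candidate $\bm{W}=\bm{\Lambda}$, i.e.\ $\bm{G}=\bm{I}$, is feasible: the column $\bm{\Lambda}^{-1}\bm{Z}_\natural^\top$ has unit $\ell_1$ norm in each row. This reading treats $\|\bm{Z}_i\|_1$ as the norm of the $i$-th column of $\bm{Z}$ (a column of $\bm{Z}^\top$), for $i$ among the $m$ atoms. I would fix this interpretation so that the normalization by $\bm{\Lambda}$ matches the constraint.

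The constraint says each row $\bm{g}_i^\top$ of $\bm{G}$ satisfies $\|\bm{g}_i^\top\bm{\Lambda}^{-1}\bm{Z}_\natural^\top\|_1\le 1$. By duality, $\|\bm{v}\|_1=\max_{\bm{s}\in\mathcal{B}_\infty}\bm{v}^\top\bm{s}$, so this is equivalent to $\bm{g}_i^\top\bm{c}\le 1$ for all $\bm{c}\in\mathcal{C}:=\bm{\Lambda}^{-1}\bm{Z}_\natural^\top\mathcal{B}_\infty$. In other words, $\bm{g}_i\in\mathcal{C}^\circ$, the polar of $\mathcal{C}$.

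Sufficient scattering now does the work. From $\mathcal{B}_2\subseteq\mathcal{C}$ we get $\mathcal{C}^\circ\subseteq\mathcal{B}_2$, so $\|\bm{g}_i\|_2\le 1$ for every $i$. Hadamard's inequality then gives $|\bm{G}|\le\prod_i\|\bm{g}_i\|_2\le 1$. Minimizing $|\bm{W}|=|\bm{\Lambda}|/|\bm{G}|$ is therefore equivalent to maximizing $|\bm{G}|$, and the value $1$ is attained at $\bm{G}=\bm{I}$. Any optimum must therefore have $|\bm{G}|=1$. Equality in Hadamard's inequality forces the rows to be orthonormal, so $\|\bm{g}_i\|_2=1$ and $\bm{G}$ is orthogonal.

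The main obstacle is showing that an orthogonal $\bm{G}$ with rows in $\mathcal{C}^\circ$ must be a signed permutation. Take a unit-norm $\bm{g}\in\mathcal{C}^\circ$. Since $\bm{g}\in\mathcal{B}_2\subseteq\mathcal{C}$, we have $\bm{g}^\top\bm{g}=1$, which means $\bm{g}$ lies in $\mathcal{C}$ and attains the maximum of the linear functional $\bm{g}^\top(\cdot)$ over $\mathcal{C}$. Such a point cannot be in the interior of $\mathcal{C}$, so $\bm{g}\in\partial\mathcal{C}$. Combined with $\bm{g}\in\partial\mathcal{B}_2$, the boundary condition $\partial\mathcal{B}_2\cap\partial\mathcal{C}=\{\pm\bm{e}^{(i)}\}$ forces $\bm{g}=\pm\bm{e}^{(j)}$.

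Orthogonality of the rows then makes $\bm{G}$ a signed permutation matrix. Unwinding the substitution, $\bm{W}=\bm{\Lambda}\bm{G}^{-1}$ is a diagonal matrix times a permutation. Hence $(\bm{\Psi}_\star,\bm{Z}_\star)=(\bm{\Psi}_\natural\bm{W},\bm{Z}_\natural\bm{W}^{-\top})$, which is exactly the SP-equivalence of the Definition. The remaining care is the bookkeeping of transposes, in particular the order $\bm{D}\bm{\Pi}$ against $\bm{\Pi}\bm{D}$; these are interchangeable, since $\bm{\Pi}\bm{D}=\bm{D}'\bm{\Pi}$ for another diagonal $\bm{D}'$.
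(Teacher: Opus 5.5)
Your proof is correct and follows essentially the argument the paper relies on. This result is only cited here (from Hu and Huang), but the paper's proof of Theorem~\ref{thm:main}, specialized to known $\bm{A}_\natural$, uses the same reparametrization, the same Hadamard bound and the same one-sparsity conclusion: your rows $\bm{g}_i$ of $\bm{G}$ are the paper's $\bm{\omega}\odot\bm{\Phi}_i$, and your polar/boundary step is the geometric form of the norm characterization $\|\tilde{\bm{U}}_\natural\bm{v}\|_1\ge\|\bm{v}\|_2$ with equality only when $\|\bm{v}\|_0\le 1$; your column-wise reading of the $\ell_1$ constraint (one bound per atom, not per $i\in[T]$) is the intended one, consistent with \eqref{opt:L1_constr} and with Hu and Huang.
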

    
\section{Blind Identification by Volume Minimization}
    In this work, we consider the problem of identifying a ground truth system and its unknown inputs $(\bm{A}_\natural, \bm{B}_\natural, (\bm{u}_\natural^{(k)})_{k \in [T]}) \in \mathbb{R}^{n \times n} \times \mathbb{R}^{n \times m} \times (\mathbb{R}^{m})^T$, where $\bm{B}_\natural$ has full column rank ($n \ge m$), from $T+1$ state observations $(\bm{x}^{(k)})_{k=0}^T \in (\mathbb{R}^n)^{T+1}$, where $\bm{x}^{(k+1)} = \bm{A}_\natural\bm{x}^{(k)} + \bm{B}_\natural\bm{u}_\natural^{(k)}$.
    Throughout, we make use of the matrix notation $\bm{X}_+ \in \mathbb{R}^{n \times T}, \bm{X} \in \mathbb{R}^{T \times n}$ and $\bm{U} \in \mathbb{R}^{T \times m}$ as defined in the introduction.
    To these ends, for any feasible system parameters and inputs $(\bm{A}, \bm{B}, \bm{U})$, the input-state relationship may be compactly expressed over all $k \in [T]$ as $\bm{X}_+ = \bm{A}\bm{X}^\top + \bm{B}\bm{U}^\top$.
    Inspired by the volume formulation of complete dictionary learning considered in the work of Hu and Huang \cite{hu2023global}, we consider
    \begin{equation}\tag{P}\label{opt:L1_constr}
        \min_{\bm{A}, \bm{B}, \bm{U}} \text{vol} (\bm{B}) \text{ s.t. } \begin{cases}
            \bm{X}_+ = \bm{A}\bm{X}^\top + \bm{B}\bm{U}^\top \\
            \max_{i} \|\bm{U}_i\|_1 \le T\mu
        \end{cases}
    \end{equation}
    where $\mu$ is a parameter chosen to fix the average entry  of the learned input representation. While we argue for its utility beyond this case, this formulation naturally emerges as a maximum likelihood estimator:

    \begin{remark}\label{rmk:mle}
        The optimization problem \eqref{opt:L1_constr} is the MLE for $(\bm{A}, \bm{B})$ given $(\bm{x}^{(k)})_{k = 0 }^T$, under $\bm{u}_i^{(k)} \overset{\text{i.i.d.}}{\sim} \text{Laplace}(0, \mu)$.
    \end{remark}
    \begin{proof}
        See appendix (\ref{proof:mle}).
    \end{proof}

    When the matrix $\bm{A}_\natural$ is known and $m=n$, this problem reduces to the complete dictionary learning formulation considered by \cite{hu2023global}. In that setting, the geometric sufficient scattering of $\tilde{\bm{U}}_\natural^\top \mathcal{B}_\infty$, where $\tilde{\bm{U}}_\natural$ is the $\ell_1$ column-normalized version of $\bm{U}_\natural$, provides a certificate of global identifiability. In this work, we develop an analogous condition for the case of a linear system with unknown $\bm{A}_\natural$ and $\bm{B}_\natural$ of full column rank, on the following \textit{fundamental zonotope}:
    \begin{definition}[Fundamental Zonotope]
        Define the $\ell_1$ column-normalized matrices $\tilde{\bm{X}}, \tilde{\bm{U}}$ such that $\tilde{\bm{X}}_i = \bm{X}_i/\|\bm{X}_i\|_1, \tilde{\bm{U}}_i = \bm{U}_i/\|\bm{U}_i\|_1$. We define the \textit{fundamental zonotope} $\mathcal{Z}$ of $(\bm{x}, \bm{u})$ as 
        \begin{equation}
            \mathcal{Z} := \begin{bmatrix}
                \tilde{\bm{X}} & \tilde{\bm{U}}
            \end{bmatrix}^\top \mathcal{B}_\infty \subseteq \mathbb{R}^{n} \times \mathbb{R}^m.
        \end{equation}
    \end{definition}
    This defines a zonotope of dimension $n + m$ if and only if $(\bm{x}, \bm{u})$ is persistently exciting. Therefore, our primary geometric condition for identifiability presents as a strengthening of this property:
    \begin{definition}[Persistently Scattering]
        We say that $(\bm{x}, \bm{u})$ is \textit{persistently scattering} if there exists a nonsingular diagonal matrix $\bm{D}$ such that 
        \begin{equation}
            \bm{D} \mathcal{B}_2 \subseteq \mathcal{Z}, \quad \bm{D} \mathcal{B}_2 \cap \partial \mathcal{Z} = \{(0, \pm\bm{e}^{(i)})\}_{i \in [m]}.
        \end{equation}
    \end{definition}
    We may remark that this is weaker than sufficient scattering of $\mathcal{Z}$. However, note that $\mathcal{Z} \cap (\{0\} \times \mathbb{R}^{m}) = \{0\} \times \mathcal{U}$, where $\mathcal{U} := \tilde{\bm{U}}^\top (\mathrm{img}(\tilde{\bm{X}})^\perp \cap \mathcal{B}_\infty)$. When the persistent scattering condition is satisfied, it implies sufficient scattering of the set $\mathcal{U}$--though it is strictly stronger, as sufficient scattering of this intersection does not require a full-dimensional ellipse to be inscribed in $\mathcal{Z}$.

    We will now establish an equivalent characterization of persistent scattering in terms of the support function of $\mathcal{Z}$:
    \begin{equation}
        \sigma_\mathcal{Z}(\bm{w}, \bm{v}) := \sup_{(\bm{a}, \bm{b}) \in \mathcal{Z}} \langle \bm{a}, \bm{w} \rangle + \langle \bm{b}, \bm{v} \rangle =  \|\tilde{\bm{X}}\bm{w} + \tilde{\bm{U}}\bm{v} \|_1.
    \end{equation}

    \begin{proposition}[Support Function Characterization]\label{prop:support_equiv}
        Define $\Omega(\bm{v}) := \inf_{\bm{w}} \sigma_{\mathcal{Z}}(\bm{w}, \bm{v})$.
        The pair $(\bm{x}, \bm{u})$ is persistently scattering if and only if the following hold simultaneously:
        \begin{enumerate}
            \item $\Omega(\bm{v}) \ge \| \bm{v}\|_2$ \label{eq:norm_ineq}
            \item $\Omega(\bm{v}) = \|\bm{v}\|_2 \iff \|\bm{v}\|_0 \le 1$ \label{eq:norm_eq}
            \item $\forall i \in [m], \left(\Omega(\bm{e}^{(i)}) = \sigma_{\mathcal{Z}}(\bm{w}, \bm{e}^{(i)}) \implies \bm{w} = 0\right)$ \label{eq:norm_nsp}
        \end{enumerate}
    \end{proposition}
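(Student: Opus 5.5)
The plan is to translate both inclusions in the definition of persistent scattering into statements about support functions. I will use the fact that for closed convex sets $\mathcal{C}_1 \subseteq \mathcal{C}_2$ if and only if $\sigma_{\mathcal{C}_1} \le \sigma_{\mathcal{C}_2}$. For the ellipsoid, $\sigma_{\bm{D}\mathcal{B}_2}(\bm{y}) = \|\bm{D}\bm{y}\|_2$, and its unique maximizer in direction $\bm{y}$ is $\bm{D}^2\bm{y}/\|\bm{D}\bm{y}\|_2$. I write $\bm{D} = \mathrm{diag}(\bm{D}_x, \bm{D}_u)$.

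A first reduction is that $\bm{D}_u = \bm{I}$ up to signs. Indeed, the points $(0, \pm \bm{e}^{(i)})$ must lie in $\bm{D}\mathcal{B}_2$, and they also lie in $\mathcal{Z}$: since $\|\tilde{\bm{U}}_i\|_1 = 1$ we have $\sigma_{\mathcal{Z}}(0, \bm{e}^{(i)}) = 1$. Together with the inclusion $\bm{D}\mathcal{B}_2 \subseteq \mathcal{Z}$ this forces $|(\bm{D}_u)_{ii}| = 1$. Persistent scattering is therefore equivalent to the existence of a nonsingular diagonal $\bm{D}_x$ with
\[
\sigma_{\mathcal{Z}}(\bm{w},\bm{v})^2 \ge \|\bm{D}_x\bm{w}\|_2^2 + \|\bm{v}\|_2^2 \quad \text{for all } (\bm{w},\bm{v}),
\]
together with the requirement that equality on the boundary occurs only at the points $(0,\pm\bm{e}^{(i)})$. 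Throughout I also use the standing persistent excitation assumption, i.e. that $[\tilde{\bm{X}}, \tilde{\bm{U}}]$ has full column rank. Without it $\mathcal{Z}$ is not full-dimensional and cannot contain $\bm{D}\mathcal{B}_2$. This assumption will be needed in the converse direction to rule out $\sigma_{\mathcal{Z}}(\bm{w},0)=0$ with $\bm{w}\neq 0$.

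\emph{Forward direction.} Taking the infimum over $\bm{w}$ in the displayed inequality gives condition 1. For condition 3, note first that $\Omega(\bm{e}^{(i)}) = 1$: the lower bound comes from condition 1 and the upper bound from $\bm{w}=0$. If $\sigma_{\mathcal{Z}}(\bm{w},\bm{e}^{(i)}) = 1$, then $1 \ge \sqrt{\|\bm{D}_x\bm{w}\|_2^2 + 1}$, so $\bm{w}=0$. For condition 2, suppose $\Omega(\bm{v}) = \|\bm{v}\|_2$.
\begin{itemize}
\item Because $\Omega$ is a partial minimization of a polyhedral function that is bounded below, the infimum is attained at some $\bm{w}$.
\item The same squeeze as above then forces $\bm{w}=0$, so $\sigma_{\mathcal{Z}}(0,\bm{v}) = \|\bm{v}\|_2$.
\item Hence the maximizer $(0,\bm{v}/\|\bm{v}\|_2)$ of the ellipsoid in direction $(0,\bm{v})$ lies in $\partial\mathcal{Z}$.
\item By the touching condition this point must be some $(0,\pm\bm{e}^{(i)})$, so $\|\bm{v}\|_0 \le 1$.
\end{itemize}
The reverse implication in condition 2 is immediate from $\Omega(\bm{e}^{(i)})=1$ and homogeneity.

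\emph{Converse direction.} Let $h(\bm{w},\bm{v}) := \sigma_{\mathcal{Z}}(\bm{w},\bm{v})^2 - \|\bm{v}\|_2^2$, which is nonnegative by condition 1. I first determine where $h$ vanishes on the unit sphere. If $h=0$, then $\Omega(\bm{v}) = \|\bm{v}\|_2$, so $\bm{v}$ is $0$ or a multiple of some $\bm{e}^{(i)}$ by condition 2. If $\bm{v}=0$, full column rank forces $\bm{w}=0$. If $\bm{v} = c\,\bm{e}^{(i)}$, homogeneity and condition 3 force $\bm{w}=0$. Hence $h>0$ on the unit sphere away from the points $(0,\pm\bm{e}^{(i)})$.

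I then choose $\bm{D}_x = \varepsilon \bm{I}$, with $\varepsilon$ small, and must show $h(\bm{w},\bm{v}) \ge \varepsilon^2\|\bm{w}\|_2^2$. Away from small neighborhoods of $(0,\pm\bm{e}^{(i)})$ this follows from compactness. The main obstacle is the local estimate near $(0,\bm{e}^{(i)})$. There I will use that $h = (\sigma_{\mathcal{Z}} - \|\bm{v}\|_2)(\sigma_{\mathcal{Z}} + \|\bm{v}\|_2)$, so it suffices to show $\sigma_{\mathcal{Z}}(\bm{w},\bm{v}) - \|\bm{v}\|_2 \ge c\|\bm{w}\|_2$ locally.
\begin{itemize}
\item The function $\bm{w} \mapsto \sigma_{\mathcal{Z}}(\bm{w},\bm{e}^{(i)})$ is polyhedral with unique minimizer $\bm{w}=0$ by condition 3, so it is sharp: $\sigma_{\mathcal{Z}}(\bm{w},\bm{e}^{(i)}) \ge 1 + c\|\bm{w}\|_2$.
\item The difficulty is controlling the perturbation in $\bm{v}$. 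My first attempt is to restrict to the normalized slice $\bm{v} = \bm{e}^{(i)} + \bm{t}$ with $\bm{t}\perp\bm{e}^{(i)}$, and to use the polyhedral structure of $\sigma_{\mathcal{Z}}$: locally around $(0,\bm{e}^{(i)})$ it is a maximum of finitely many linear pieces.
\item Combined with $\Omega(\bm{v}) \ge \|\bm{v}\|_2$, this piecewise-linear structure should give $\sigma_{\mathcal{Z}}(\bm{w},\bm{v}) \ge \Omega(\bm{v}) + c'\,\mathrm{dist}(\bm{w},\arg\min \sigma_{\mathcal{Z}}(\cdot,\bm{v}))$, a Hoffman-type error bound.
\item The polyhedral argmin is outer Lipschitz (Robinson), and it equals $\{0\}$ at $\bm{v}=\bm{e}^{(i)}$. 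I expect this to give a lower bound of the form $c\|\bm{w}\|_2 - C\|\bm{t}\|_2$.
\item The leftover term $C\|\bm{t}\|_2$ is handled by the quadratic gap $\|\bm{v}\|_2^2 - 1 \approx \|\bm{t}\|_2^2$ together with $\Omega(\bm{v}) \ge \|\bm{v}\|_2$.
\end{itemize}
This balancing is the delicate step. If it fails, the fallback is to let $\bm{D}_x$ be any sufficiently small matrix extracted directly from a compactness argument on cones around $(0,\pm\bm{e}^{(i)})$.

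Finally, the touching condition follows from the characterization of $h$'s zero set: on the ellipsoid boundary, a support-function equality can occur only where $h=0$ and $\bm{w}=0$, which are exactly the points $(0,\pm\bm{e}^{(i)})$.
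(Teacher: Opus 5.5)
Your overall architecture is sound and parallels the paper's. The reduction to $\bm{D}_u=\pm\bm{I}$, the forward direction via the squeeze, and, for the converse, locating the zero set of $h$ on the sphere and using compactness away from $(0,\pm\bm{e}^{(i)})$ all work. The paper packages the converse as Lemma~\ref{lemma:lifting}: it fixes a unit $\bm{w}$ and rules out minimizing sequences with $\|\bm{v}\|_2\to\infty$ and $\bm{v}/\|\bm{v}\|_2\to\pm\bm{e}^{(j)}$, which after rescaling is exactly your neighborhood of $(0,\pm\bm{e}^{(j)})$. A minor point: you do not need persistent excitation as a standing assumption in the converse. Condition 3 already gives $\ker\tilde{\bm{X}}=\{0\}$, since $\tilde{\bm{X}}\bm{w}=0$ implies $\sigma_{\mathcal{Z}}(\bm{w},\bm{e}^{(i)})=1=\Omega(\bm{e}^{(i)})$.

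The genuine gap is the local estimate near $(0,\bm{e}^{(i)})$, the step you flagged yourself.

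\emph{Why the balancing fails.} After Hoffman and Robinson you have at best $\sigma_{\mathcal{Z}}(\bm{w},\bm{e}^{(i)}+\bm{t})\ge\Omega(\bm{e}^{(i)}+\bm{t})+c'\|\bm{w}\|_2-C\|\bm{t}\|_2$. You propose to absorb $C\|\bm{t}\|_2$ using $\|\bm{v}\|_2^2-1\approx\|\bm{t}\|_2^2$ together with $\Omega(\bm{v})\ge\|\bm{v}\|_2$, and this cannot work. Condition 1 gives no positive margin of $\Omega(\bm{v})$ over $\|\bm{v}\|_2$. The quantity $\|\bm{v}\|_2^2-1$ enters $h=\sigma_{\mathcal{Z}}^2-\|\bm{v}\|_2^2$ with a minus sign. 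In any case an $O(\|\bm{t}\|_2^2)$ term cannot dominate an $O(\|\bm{t}\|_2)$ one. So in the regime $\|\bm{w}\|_2\lesssim\|\bm{t}\|_2$ your bounds yield only $h\ge 0$. The compactness fallback fails for the same reason: $h$ vanishes at $(0,\pm\bm{e}^{(i)})$, so compactness cannot bound $h/\|\bm{w}\|_2^2$ from below there.

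\emph{The missing ingredient.} You need first-order sharpness of $\Omega$ against the sphere: $\Omega(\bm{e}^{(i)}+\bm{t})\ge 1+a\|\bm{t}\|_2$ for small $\bm{t}\perp\bm{e}^{(i)}$. This is exactly what the paper uses, in the form $\Omega(\bm{v})\ge 1+c\|\bm{e}^{(j)}-\bm{v}\|_2$ near $\bm{e}^{(j)}$. It follows from polyhedrality. Locally $\Omega(\bm{e}^{(i)}+\bm{t})=1+\Omega'(\bm{e}^{(i)};\bm{t})$. If $\Omega'(\bm{e}^{(i)};\bm{t})\le 0$ for some nonzero $\bm{t}\perp\bm{e}^{(i)}$, then $\Omega(\bm{e}^{(i)}+s\bm{t})\le 1<\|\bm{e}^{(i)}+s\bm{t}\|_2$ for small $s>0$, contradicting condition 1. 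Compactness of the unit sphere of $(\bm{e}^{(i)})^\perp$ then gives $a>0$.

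\emph{How it closes.} With this bound your balancing works. Take the larger of the two lower bounds $a\|\bm{t}\|_2$ and $(a-C)\|\bm{t}\|_2+c'\|\bm{w}\|_2$, each minus $\|\bm{t}\|_2^2/2$, and split on whether $C\|\bm{t}\|_2$ is below or above $c'\|\bm{w}\|_2/2$.

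A more direct route avoids Hoffman and Robinson altogether. Apply the same argument to the joint directional derivative $\psi(\bm{w},\bm{t}):=\sigma_{\mathcal{Z}}'((0,\bm{e}^{(i)});(\bm{w},\bm{t}))$ on $\mathbb{R}^n\times(\bm{e}^{(i)})^\perp$. It is strictly positive off the origin: condition 1 handles $\bm{t}\neq 0$, and condition 3 handles $\bm{t}=0$. Hence locally $\sigma_{\mathcal{Z}}(\bm{w},\bm{e}^{(i)}+\bm{t})\ge 1+c(\|\bm{w}\|_2+\|\bm{t}\|_2)$, which dominates $\sqrt{1+\|\bm{t}\|_2^2}$ and gives the local estimate.
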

    \begin{proof}
        See appendix \ref{proof:support_equiv}. The proof is nontrivial and relies heavily on compactness of the unit sphere, and piecewise linearity of $\sigma_{\mathcal{Z}}(\bm{w}, \bm{v})$ and $\Omega(\bm{v})$.
    \end{proof}

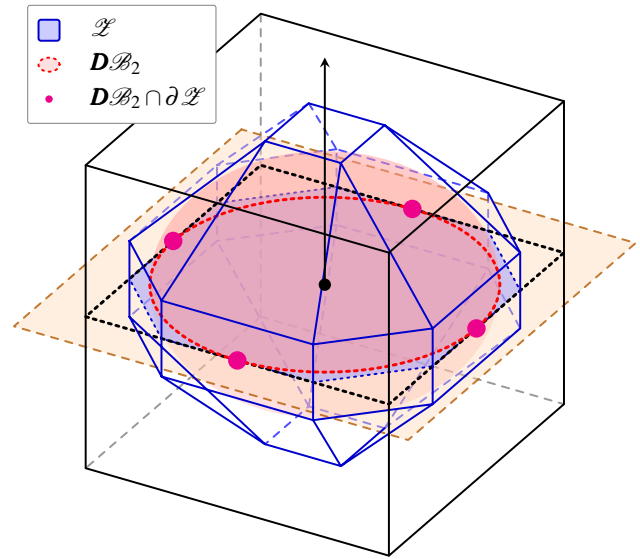
\begin{figure}[t]
    \centering
    \resizebox{\columnwidth}{!}{%
    \tdplotsetmaincoords{60}{120}
    
    \begin{tikzpicture}[tdplot_main_coords, scale=2.5, line join=round, line cap=round]
    
        \pgfmathsetmacro{\b}{0.5}
        \pgfmathsetmacro{\c}{0.25}
        \pgfmathsetmacro{\eY}{sqrt(0.25 + 0.75 * (4/9))}
    
        \coordinate (V1) at (1, \b, 0);
        \coordinate (V2) at (\b, 1, 0);
        \coordinate (V3) at (-\b, 1, 0);
        \coordinate (V4) at (-1, \b, 0);
        \coordinate (V5) at (-1, -\b, 0);
        \coordinate (V6) at (-\b, -1, 0);
        \coordinate (V7) at (\b, -1, 0);
        \coordinate (V8) at (1, -\b, 0);
    
        \draw[dashed, thick, gray!80] (-1,-1,-1) -- (1,-1,-1);
        \draw[dashed, thick, gray!80] (-1,-1,-1) -- (-1,1,-1);
        \draw[dashed, thick, gray!80] (-1,-1,-1) -- (-1,-1,1);
    
        \draw[dashed, thick, blue!70] 
            (-1,\b,\c) -- (-1,-\b,\c) -- (-1,-\b,-\c) -- (-1,\b,-\c) -- cycle
            (\b,-1,\c) -- (-\b,-1,\c) -- (-\b,-1,-\c) -- (\b,-1,-\c)
            (-\c,\c,-1) -- (-\c,-\c,-1) -- (\c,-\c,-1)
            (-1,\b,-\c) -- (-\b,1,-\c)
            (-1,-\b,\c) -- (-\b,-1,\c)
            (-1,-\b,-\c) -- (-\b,-1,-\c)
            (-\c,-\c,1) -- (-1,-\b,\c)
            (-\c,-\c,1) -- (-\b,-1,\c)
            (-\c,\c,-1) -- (-1,\b,-\c)
            (-\c,-\c,-1) -- (-1,-\b,-\c)
            (-\c,\c,-1) -- (-\b,1,-\c)
            (-\c,-\c,-1) -- (-\b,-1,-\c);

        \begin{scope}[tdplot_screen_coords]
            \fill[red!40, opacity=0.3] (-1,0) arc (180:360:1 and \eY) arc (0:180:1 and 0.5);
        \end{scope}
    
        \fill[orange!30, opacity=0.4] (-1.3, -1.3, 0) -- (1.3, -1.3, 0) -- (1.3, 1.3, 0) -- (-1.3, 1.3, 0) -- cycle;
        \draw[orange!70!black, thick, dashed, opacity=0.8] (-1.3, -1.3, 0) -- (1.3, -1.3, 0) -- (1.3, 1.3, 0) -- (-1.3, 1.3, 0) -- cycle;
    
        \fill[blue!40, opacity=0.5] (V1) -- (V2) -- (V3) -- (V4) -- (V5) -- (V6) -- (V7) -- (V8) -- cycle;
        \draw[dotted, thick, blue!80!black] (V1) -- (V2) -- (V3) -- (V4) -- (V5) -- (V6) -- (V7) -- (V8) -- cycle;
    
        \begin{scope}[tdplot_screen_coords]
            \fill[red!40, opacity=0.5] (1,0) arc (0:180:1 and \eY) arc (180:360:1 and 0.5);
        \end{scope}
    
        \draw[dotted, very thick, black] (1,-1,0) -- (1,1,0) -- (-1,1,0) -- (-1,-1,0) -- cycle;
        
        \draw[dotted, very thick, red] plot[domain=0:360, samples=73] ({cos(\x)}, {sin(\x)}, 0);
    
        \draw[thick, blue!80!black]
            (1,\b,\c) -- (1,-\b,\c) -- (1,-\b,-\c) -- (1,\b,-\c) -- cycle
            (\b,1,\c) -- (-\b,1,\c) -- (-\b,1,-\c) -- (\b,1,-\c) -- cycle
            (\c,\c,1) -- (-\c,\c,1) -- (-\c,-\c,1) -- (\c,-\c,1) -- cycle
            (\b,-1,-\c) -- (\b,-1,\c)
            (\c,\c,-1) -- (-\c,\c,-1)
            (\c,-\c,-1) -- (\c,\c,-1)
            (1,\b,\c) -- (\b,1,\c)
            (1,\b,-\c) -- (\b,1,-\c)
            (1,-\b,\c) -- (\b,-1,\c)
            (1,-\b,-\c) -- (\b,-1,-\c)
            (-1,\b,\c) -- (-\b,1,\c)
            (\c,\c,1) -- (1,\b,\c)
            (\c,-\c,1) -- (1,-\b,\c)
            (-\c,\c,1) -- (-1,\b,\c)
            (\c,\c,1) -- (\b,1,\c)
            (-\c,\c,1) -- (-\b,1,\c)
            (\c,-\c,1) -- (\b,-1,\c)
            (\c,\c,-1) -- (1,\b,-\c)
            (\c,-\c,-1) -- (1,-\b,-\c)
            (\c,\c,-1) -- (\b,1,-\c)
            (\c,-\c,-1) -- (\b,-1,-\c);
    
        \fill (0,0,0) circle (1pt);
        \draw[-stealth, thick] (0,0,0) -- (0,0,1.5);
    
        \fill[magenta] (1, 0, 0) circle (1.5pt);
        \fill[magenta] (0, 1, 0) circle (1.5pt);
        \fill[magenta] (-1, 0, 0) circle (1.5pt);
        \fill[magenta] (0, -1, 0) circle (1.5pt);
    
        \draw[thick] (1,-1,-1) -- (1,1,-1) -- (-1,1,-1) -- (-1,1,1) -- (-1,-1,1) -- (1,-1,1) -- cycle;
        
        \draw[thick] (1,1,1) -- (1,1,-1);
        \draw[thick] (1,1,1) -- (-1,1,1);
        \draw[thick] (1,1,1) -- (1,-1,1);
    
        \begin{scope}[tdplot_screen_coords]
            \node[draw=gray, fill=white, rounded corners=1pt, anchor=north west, inner sep=4pt] at (-1.7, 1.6) {
                \begin{tabular}{@{}cl@{}}
                    \tikz[baseline=-0.5ex]{\draw[blue!80!black, thick, fill=blue!40, fill opacity=0.5] (-0.15,-0.15) rectangle (0.15,0.15);} & $\mathcal{Z}$ \\[2pt]
                    \tikz[baseline=-0.5ex]{\draw[red, dotted, thick, fill=red!40, fill opacity=0.3] (0,0) ellipse (0.15 and 0.1);} & $\bm{D} \mathcal{B}_2$ \\[2pt]
                    \tikz[baseline=-0.5ex]{\fill[magenta] (0,0) circle (1.5pt);} & $\bm{D} \mathcal{B}_2 \cap \partial \mathcal{Z}$
                \end{tabular}
            };
        \end{scope}
    
    \end{tikzpicture}%
    } 
    \vspace{-1em}
    \caption{A visualization of the persistent scattering condition, viewing $\mathbb{R}^n \times \mathbb{R}^m \to \mathbb{R} \times \mathbb{R}^2$, with the first (states) coordinate vertical, and remaining coordinates (inputs) in the $x, y$ plane. The set $\mathcal{U}$, inscribed in $\{0\} \times [-1, 1]^2$, is sufficiently scattered relative to the $x, y$ plane. The set $\mathcal{Z}$ (the polyhedron outlined in solid blue) is not sufficiently scattered, but contains an ellipse (in red) such that the intersection of its boundary with the ellipse is $\{(0, \pm1, 0), (0, 0, \pm 1)\}$.}
    \label{fig:polyhedron_intersection}
\end{figure}

    \begin{table*}[ht]\label{tab:conditions}
    
\centering
\caption{%
    Comparison of identifiability conditions. $\tilde{\bm{U}} \in \mathbb{R}^{T \times m}$ and $\tilde{\bm{X}} \in \mathbb{R}^{T \times n}$ denote $\ell_1$ column-normalized input and state matrices respectively. $\mathcal{Z} := [\tilde{\bm{X}}, \tilde{\bm{U}}]^\top \mathcal{B}_\infty$. $\sigma_\mathcal{Z}$ denotes the support function of $\mathcal{Z}$, and $\Omega(\bm{v}) := \inf_{\bm{w}} \sigma_{\mathcal{Z}}(\bm{w}, \bm{v})$
}

\newcolumntype{L}[1]{>{\raggedright\arraybackslash}m{#1}}

\renewcommand{\arraystretch}{1.55}
\begin{tabular}{@{} L{2cm} L{2.5cm} L{3.5cm} L{3.5cm} L{2.5cm} @{}}
\toprule
\textbf{Condition} & \textbf{Setting} & \textbf{Geometric Characterization}
    & \textbf{Norm Characterization} & \textbf{Recovers} \\
\midrule
Sufficient \mbox{Scattering}~\cite{hu2023global}
    & Dict.\ learning \newline (known $\bm{A}_\natural$, $m \le n$)
    & $\mathcal{B}_2 \subseteq \tilde{\bm{U}}_\natural^\top \mathcal{B}_\infty$
      and
      \mbox{$\mathcal{B}_2 \cap \partial(\tilde{\bm{U}}_\natural^\top \mathcal{B}_\infty)
       = \{\pm\bm{e}_i\}_{i\in[m]}$}
    & $\|\tilde{\bm{U}}_\natural\bm{v}\|_1
       \ge \|\bm{v}\|_2\;\forall\,\bm{v}$,
      \mbox{with equality $\Rightarrow \|\bm{v}\|_0 \le 1$}
    & $(\bm{B}_\natural, \bm{U}_\natural)$ up to
      equivalence \\[4pt]
Persistent \mbox{Excitation}~\cite{willems-et-al-scl05}
    & Known-$\bm{U}_\natural$ sysid
    & $\exists \bm{D} \succ 0 \text{ diagonal}, \bm{D} \mathcal{B}_2 \subseteq \mathcal{Z}$
    & $\exists \varepsilon > 0, \sigma_{\mathcal{Z}}(\bm{w}, \bm{v}) \ge \varepsilon\|(\bm{w}, \bm{v})\|_2$
    & $(\bm{A}_\natural, \bm{B}_\natural)$ exactly \\[4pt]
Persistent Scattering [Ours]
    & Blind sysid \newline ($\bm{A}_\natural, \bm{B}_\natural, \bm{U}_\natural$ unknown)
    & $\exists \bm{D} \succ 0 \text{ diagonal }$, \mbox{$\bm{D}\mathcal{B}_2 \subseteq \mathcal{Z}$}, \mbox{$\bm{D}\mathcal{B}_2 \cap \partial \mathcal{Z} = \{(0, \pm \bm{e}_i)\}_{i \in [m]}$}
    & $\Omega(\bm{v}) \ge
       \|\bm{v}\|_2$,
      \mbox{$\Omega(\bm{v}) = \|\bm{v}\|_2 \implies \|\bm{v}\|_0 \le 1$,}
      \mbox{$\Omega(\bm{v}) = \sigma_{\mathcal{Z}}(\bm{w}, \bm{v}) \implies \bm{w} = 0$}
    & $\bm{A}_\natural$ exactly; $(\bm{B}_\natural, \bm{U}_\natural)$
      up to equivalence \\
\bottomrule
\end{tabular}
    \end{table*}

    We note that the gauge $\Omega(\bm{v})$ is exactly the support function of $\mathcal{U}$. Therefore, the first two conditions of proposition \ref{prop:support_equiv} concern the behavior of this intersection, and mirror the sufficient scattering characterization in \cite{hu2023global}. The third manifests due to the additional ambiguity introduced by the unknown $\bm{A}$ matrix, reflecting the need for the zonotope $\mathcal{Z}$ to be nondegenerate.
    
    Our main theorem states that the persistent scattering condition, which we leverage in the proof through its characterization in proposition \ref{prop:support_equiv}, is sufficient for recovery of the system and unknown inputs, up to equivalence with $(\bm{B}_\natural, \bm{U}_\natural)$.

    \begin{theorem}\label{thm:main}
        Suppose that $\bm{X}_+ = \bm{A}_\natural\bm{X}^\top + \bm{B}_\natural \bm{U}_\natural^\top$, where $\ker(\bm{B}_\natural) =0$, and $(\bm{x}, \bm{u}_\natural)$ is persistently scattering. Then every globally optimal solution $(\bm{A}_\star, \bm{B}_\star, \bm{U}_\star)$ to \eqref{opt:L1_constr} satisfies $\bm{A}_\star = \bm{A}_\natural$ and $(\bm{B}_\star, \bm{U}_\star) \cong (\bm{B}_\natural, \bm{U}_\natural)$.
    \end{theorem}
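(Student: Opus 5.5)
Following Hu and Huang, the plan is to compare volumes. Any feasible $(\bm{A},\bm{B},\bm{U})$ with $\vol(\bm{B}) \le \vol(\bm{B}_\natural)$ will be shown to satisfy $\bm{A} = \bm{A}_\natural$ and $(\bm{B},\bm{U}) \cong (\bm{B}_\natural,\bm{U}_\natural)$, using Proposition~\ref{prop:support_equiv} in place of geometric containment arguments.

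\textbf{Step 1 (reduction to a linear change of variables).} Let $(\bm{A},\bm{B},\bm{U})$ be optimal. Since $(\bm{A}_\natural,\bm{B}_\natural,\bm{U}_\natural)$ is feasible after a diagonal rescaling of $\bm{U}_\natural$ (to meet $\|\bm{U}_i\|_1 \le T\mu$), optimality gives $0<\vol(\bm{B}) \le \vol(\bm{B}_\natural)$; in particular $\bm{B}$ has full column rank. Subtracting the two feasibility equations gives
\[
(\bm{A}-\bm{A}_\natural)\bm{X}^\top + \bm{B}\bm{U}^\top = \bm{B}_\natural\bm{U}_\natural^\top .
\]
Persistent scattering implies that $\mathcal{Z}$ is full-dimensional, so $[\bm{X},\bm{U}_\natural]$ has full column rank. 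Taking the row space of both sides, together with full column rank of $\bm{B}$ and $\bm{B}_\natural$, shows that $\mathrm{col}(\bm{U}) \subseteq \mathrm{col}([\bm{X},\bm{U}_\natural])$. Hence $\bm{U} = \bm{X}\bm{W} + \bm{U}_\natural\bm{G}$ for some $\bm{W}\in\mathbb{R}^{n\times m}$ and $\bm{G}\in\mathbb{R}^{m\times m}$. Substituting back and using full column rank of $[\bm{X},\bm{U}_\natural]$ yields $\bm{B}_\natural = \bm{B}\bm{G}^\top$ and $\bm{A}_\natural - \bm{A} = \bm{B}\bm{W}^\top$. So $\bm{G}$ is nonsingular and $\vol(\bm{B}_\natural) = \vol(\bm{B})\,|\bm{G}|$. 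Optimality then forces $|\bm{G}| \le 1$.

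\textbf{Step 2 (upper bound via the norms of the rows of $\bm{G}$).} Work in the normalized coordinates: with $\bm{D}_X, \bm{D}_U$ the diagonal $\ell_1$ column norms, write $\bm{U}_i = \tilde{\bm{X}}\bm{w}_i + \tilde{\bm{U}}_\natural\bm{g}_i$, where $\bm{g}_i$ is a rescaled column of $\bm{G}$. After normalizing the scale of $\bm{U}_\natural$ (absorbed into the diagonal SP ambiguity, i.e.\ $\|(\bm{U}_\natural)_i\|_1 = T\mu$), the constraint gives
\[
T\mu \ge \|\bm{U}_i\|_1 = \sigma_{\mathcal{Z}}(\bm{w}_i,\bm{g}_i) \ge \Omega(\bm{g}_i) \ge \|\bm{g}_i\|_2 ,
\]
using condition~\ref{eq:norm_ineq}. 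By Hadamard's inequality, $|\bm{G}| \le \prod_i \|\bm{g}_i\|_2 \le 1$, with scale factors tracked so that the true solution attains the bound with equality.

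\textbf{Step 3 (equality case).} Optimality means $\vol(\bm{B}) \le \vol(\bm{B}_\natural)$, so $|\bm{G}| \ge 1$, and every inequality in the chain is tight. Equality in Hadamard forces the $\bm{g}_i$ to be orthogonal. The equalities $\Omega(\bm{g}_i) = \|\bm{g}_i\|_2$ together with condition~\ref{eq:norm_eq} force $\|\bm{g}_i\|_0 \le 1$, so by nonsingularity $\bm{G}$ is a scaled permutation. Then $\sigma_{\mathcal{Z}}(\bm{w}_i,\bm{g}_i) = \Omega(\bm{g}_i)$ with $\bm{g}_i = c\,\bm{e}^{(j)}$; by positive homogeneity, condition~\ref{eq:norm_nsp} gives $\bm{w}_i = 0$. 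Hence $\bm{W}=0$, so $\bm{A} = \bm{A}_\natural$, $\bm{U} = \bm{U}_\natural\bm{G}$, and $\bm{B} = \bm{B}_\natural\bm{G}^{-\top}$, which is exactly SP-equivalence.

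The main obstacle is Step 2's bookkeeping. The $\ell_1$ normalization of both $\bm{X}$ and $\bm{U}_\natural$ must be matched against the constraint scale $T\mu$ so that the ground truth is exactly extremal. The Hadamard bound must also be applied to the correctly scaled $\bm{G}$, with the determinant of the normalization diagonals cancelling between $\vol(\bm{B})$ and $\vol(\bm{B}_\natural)$. I would handle this by first replacing $\bm{U}_\natural$ with its rescaled version $T\mu\,\tilde{\bm{U}}_\natural$, which absorbs all scaling into $\bm{B}_\natural$ and the SP-equivalence class. Step 1's rank argument also needs some care, specifically the use of $\ker\bm{B}=0$ to identify the row space.
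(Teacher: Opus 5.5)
Your proposal is correct and follows essentially the same route as the paper. Your decomposition $\bm{U} = \bm{X}\bm{W} + \bm{U}_\natural\bm{G}$ with $\bm{B}_\natural = \bm{B}\bm{G}^\top$ is the paper's reparametrization $\bm{P}_X^\perp\bm{U} = \bm{P}_X^\perp\bm{U}_\natural\bm{\Phi}$ from Lemma~\ref{lemma:equiv_cond}, and you then use the same chain (constraint, $\sigma_{\mathcal{Z}} \ge \Omega \ge \|\cdot\|_2$, Hadamard), with condition~2 of Proposition~\ref{prop:support_equiv} giving $1$-sparsity and condition~3 removing the $\bm{X}$-component.

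Two small slips need fixing. First, full column rank of $\bm{B}$ follows from feasibility, since $\bm{B}\bm{U}^\top\bm{P}_X^\perp = \bm{B}_\natural\bm{U}_\natural^\top\bm{P}_X^\perp$ has rank $m$. It does not follow from optimality, which if anything would favor $\vol(\bm{B}) = 0$. Second, in Step~1 optimality yields $|\bm{G}| \ge 1$, not $\le 1$, as you correctly use in Step~3.
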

 
    \begin{proof}
        Denote $\bm{P}_{X}^\perp = \bm{I} - \bm{X}\bm{X}^\dagger$. By lemma \ref{lemma:equiv_cond}, it suffices to study
        \[
            \mathcal{G} := \arg\min_{\bm{A}, \bm{B}, \bm{U}, \bm{\Phi}} \vol(\bm{B}) \text{ s.t. } \begin{cases}
                \bm{A} = (\bm{X}_+ - \bm{B}\bm{U}^\top)(\bm{X}^\top)^\dagger \\
                \bm{P}_{X}^\perp\bm{U} = \bm{P}_{X}^\perp\bm{U}_\natural \bm{\Phi} \\
                \bm{B}_\natural = \bm{B}\bm{\Phi}^\top \\
                \max_{i} \|\bm{U}_i\|_1 \le T\mu
            \end{cases}
        \]
        Let $(\bm{A}_\star, \bm{B}_\star, \bm{U}_\star, \bm{\Phi}_\star) \in \mathcal{G}$.
        By the full rank of $\bm{B}_\natural$, it follows that $\bm{\Phi}_\star$ is invertible. 
        By the definition of the volume and monotonicity of the logarithm, it follows that for any feasible $\bm{B}, \bm{\Phi}$, $\log \vol(\bm{B}) = \frac{1}{2}\log\det(\bm{\Phi}^{-1}\bm{B}_\natural^\top\bm{B}_\natural \bm{\Phi}^{-\top}) = \frac{1}{2}\log\det(\bm{B}_\natural^\top\bm{B}_\natural) - \log|\bm{\Phi}|$,
        and so
        \begin{align*}
            \bm{A}_\star &= (\bm{X}_+ - \bm{B}_\star \bm{U}_{\star}^\top)\bm{X}^\dagger, \quad \bm{B}_\star = \bm{B}_\natural \bm{\Phi}_\star^{-\top} \\(\bm{U}_\star, \bm{\Phi}_\star) &\in \arg\min_{\bm{U}, \bm{\Phi}} -\log|\bm{\Phi}| \text{ s.t. } \forall i, \begin{cases}
                \bm{P}_{X}^\perp\bm{U}_i = \bm{P}_{X}^\perp\bm{U}_\natural\bm{\Phi}_i \\
                \|\bm{U}_i\|_1 \le T\mu
            \end{cases}
        \end{align*}
        Observe that
        \begin{align*}
            \Omega(\bm{\phi})
            &= \inf_{\bm{w}} \|\tilde{\bm{X}}\bm{w} + \tilde{\bm{U}_\natural}\bm{\phi}\|_1 \\
            &= \inf_{\bm{v}, \bm{w}} \|\bm{v}\|_1 \text{ s.t. } \bm{v} = \tilde{\bm{X}}\bm{w} + \tilde{\bm{U}}_\natural \bm{\phi} \\
            &= \inf_{\bm{v}} \|\bm{v}\|_1 \text{ s.t. } \bm{P}_{X}^\perp\bm{v} = \bm{P}_{X}^\perp\bm{U}_\natural(\bm{\omega}^{-1} \odot \bm{\phi})
        \end{align*}
        where we have defined $\bm{\omega}_i := \|(\bm{U}_\natural)_i\|_1$.
        It follows that
        \[
            \bm{\Phi}_\star \in \arg\max_{\bm{\Phi}} |\bm{\Phi}| \text{ s.t. } \forall i, \Omega(\bm{\omega} \odot\bm{\Phi}_i) \le T\mu
        \]
        By Hadamard's inequality and persistent scattering C\ref{eq:norm_ineq}, we have
        \begin{align*}
            |\bm{\omega} \odot \bm{\Phi}| \le \prod_{i \in [m]} \|\bm{\omega} \odot \bm{\Phi}_i\|_2 \le \prod_{i \in [m]} \Omega(\bm{\omega} \odot \bm{\Phi}_i) \le (T\mu)^m,
        \end{align*}
        where each inequality is tight if and only if the columns of $\bm{\Phi}$ are orthogonal (equality for Hadamard's inequality) and 1-sparse (by persistent scattering C\ref{eq:norm_eq}). Since $|\bm{\Phi}| = |\bm{\omega} \odot \bm{\Phi}|/\prod_{i \in [m]} \|\bm{U}_i\|_1$, we may conclude that these inequalities must also be tight for $\bm{\Phi}_\star$. Therefore, $\bm{\Phi}_\star = \bm{D}\bm{\Pi}$ for some nonsingular diagonal $\bm{D}$ and permutation matrix $\bm{\Pi}$, where in particular we may take $\bm{D}_{ii} = T\mu/\|(\bm{U}_\natural)_i\|_1$. The optimal $(\bm{U}_\star)_i$ therefore equals $\tilde{\bm{X}}\bm{w}^* + \bm{U}_\natural(\bm{\Phi}_\star)_i$, where $\bm{w}^*$ minimizes $\|\tilde{\bm{X}}\bm{w}^* + \bm{U}_\natural(\bm{\Phi}_\star)_i\|_1$.
        By persistent scattering C\ref{eq:norm_nsp}, the sparsity of $(\bm{\Phi}_\star)_i$ implies that $\bm{w}^* =0$, so $\bm{U}_\star = \bm{U}_\natural \bm{\Phi}_\star = \bm{U}_\natural \bm{D}\bm{\Pi}$.
        We conclude that $\bm{B}_\star = \bm{B}_\natural \bm{D}^{-1}\bm{\Pi}$, and lastly, as $\bm{B}_\star\bm{U}_\star^\top = \bm{B}_\natural \bm{\Phi}_\star^{-\top} \bm{\Phi}_\star^\top \bm{U}_\natural^\top = \bm{B}_\natural \bm{U}_\natural^\top$, $\bm{A}_\star = (\bm{X}_+ - \bm{B}_\star \bm{U}_\star)(\bm{X}^\top)^\dagger = \bm{A}_\natural$.
    \end{proof}

\section{BLISS: Algorithmic Implementation}
In this section we detail the BLISS algorithm (Blind Identification of Sparse Signals), our algorithmic approach to solving the problem \eqref{opt:L1_constr}. Broadly speaking, we develop a generalized alternating direction method of multipliers, with primal-dual residual balancing, leveraging efficiently computable proximal operators of $f(\bm{\Phi}) = -\log|\bm{\Phi}|$ and projections onto the scaled $\ell_1$ ball \cite{condat2016fast}. This is in contrast to prior work on complete dictionary learning via volume optimization, which employed a linearized ADMM approach. Crucially, we employ the following equivalence:

\begin{proposition}\label{prop:svd_opt}
    Let $\bm{P}_{\bm{X}}^\perp := \bm{I} - \bm{X}\bm{X}^\dagger$, and $\bm{X}_+ \bm{P}_{\bm{X}}^\perp = \bm{Q}\bm{\Sigma}\bm{V}^\top$ be a thin SVD\footnote{\textit{A thin SVD} is taken to be such that $\bm{\Sigma}$ is square and full rank.}. Denote by $\mathcal{G} :=\{(\bm{\Phi}_\star^{(i)}, \bm{U}_\star^{(i)})\}_{i \in I}$ the set of global optima of
    \[
        \min_{\bm{\Phi}, \bm{U}} -\log|\bm{\Phi}| \text{ s.t. } \begin{cases}
            \bm{V}\bm{\Phi} = \bm{P}_{\bm{X}}^\perp\bm{U} \\ 
            \max_{i}\|\bm{U}_i\|_1 \le \mu T
        \end{cases}
    \]

    The set of global optima of \eqref{opt:L1_constr} satisfies $\mathcal{G}_P = \{((\bm{X}_+ - \bm{Q}\bm{\Sigma}(\bm{\Phi}_\star^{(i)})^{-\top}(\bm{U}_\star^{(i)})^\top)(\bm{X}^\top)^\dagger, \bm{Q}\bm{\Sigma}(\bm{\Phi}_\star^{(i)})^{-\top}, \bm{U}_\star^{(i)}\}_{i \in I}$.
\end{proposition}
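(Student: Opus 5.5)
The plan is to eliminate $\bm{A}$ by projecting onto the orthogonal complement of the column space of $\bm{X}$, and then to reparametrize $\bm{B}$ through the thin SVD of $\bm{X}_+\bm{P}_{\bm{X}}^\perp$. This gives a bijection between feasible points of \eqref{opt:L1_constr} and feasible points of the reduced problem. Under this bijection the objectives differ by an additive constant after taking logarithms, so global optima correspond.

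\textbf{Step 1 (eliminating $\bm{A}$).} Since $\bm{P}_{\bm{X}}^\perp$ is the symmetric projector onto $\mathrm{img}(\bm{X})^\perp$, we have $\bm{X}^\top\bm{P}_{\bm{X}}^\perp = 0$. Right-multiplying the constraint $\bm{X}_+ = \bm{A}\bm{X}^\top + \bm{B}\bm{U}^\top$ by $\bm{P}_{\bm{X}}^\perp$ therefore gives
\[
\bm{Q}\bm{\Sigma}\bm{V}^\top = \bm{B}(\bm{P}_{\bm{X}}^\perp\bm{U})^\top .
\]
Conversely, suppose $(\bm{B},\bm{U})$ satisfies this identity. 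Then $(\bm{X}_+ - \bm{B}\bm{U}^\top)\bm{P}_{\bm{X}}^\perp = 0$, so the rows of $\bm{X}_+ - \bm{B}\bm{U}^\top$ lie in $\mathrm{img}(\bm{X})$. Since $(\bm{X}^\top)^\dagger\bm{X}^\top$ is the orthogonal projector onto $\mathrm{img}(\bm{X})$, setting $\bm{A} := (\bm{X}_+ - \bm{B}\bm{U}^\top)(\bm{X}^\top)^\dagger$ recovers $\bm{A}\bm{X}^\top = \bm{X}_+ - \bm{B}\bm{U}^\top$. When $\bm{X}$ has full column rank, this $\bm{A}$ is moreover the unique completion. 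Hence \eqref{opt:L1_constr} is equivalent to minimizing $\vol(\bm{B})$ over pairs $(\bm{B},\bm{U})$ satisfying the projected identity and the $\ell_1$ constraint, with $\bm{A}$ given by the stated formula.

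\textbf{Step 2 (reparametrizing $\bm{B}$).} Here I need the rank of $\bm{X}_+\bm{P}_{\bm{X}}^\perp$ to equal $m$, so that $\bm{\Sigma}\in\mathbb{R}^{m\times m}$ and $\bm{V}$ has $m$ orthonormal columns. I will justify this from the ground truth. We have $\bm{X}_+\bm{P}_{\bm{X}}^\perp = \bm{B}_\natural(\bm{P}_{\bm{X}}^\perp\bm{U}_\natural)^\top$, with $\ker\bm{B}_\natural = 0$, and $\bm{P}_{\bm{X}}^\perp\bm{U}_\natural$ has full column rank because $[\bm{X},\bm{U}_\natural]$ has full column rank (persistent excitation, implied by persistent scattering). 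I expect this to be the main delicate point, and I will state it as the standing assumption inherited from the setting of Theorem~\ref{thm:main}.

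Now take any feasible $\bm{B}\in\mathbb{R}^{n\times m}$. The product $\bm{B}(\bm{P}_{\bm{X}}^\perp\bm{U})^\top$ has rank $m$, which forces both $\bm{B}$ and $\bm{P}_{\bm{X}}^\perp\bm{U}$ to have rank $m$ and gives $\mathrm{img}(\bm{B}) = \mathrm{img}(\bm{Q})$. Therefore $\bm{B} = \bm{Q}\bm{\Sigma}\bm{\Phi}^{-\top}$ for a unique invertible $\bm{\Phi}$. Substituting and cancelling the full-rank factor $\bm{Q}\bm{\Sigma}$ on the left gives $\bm{V}^\top = \bm{\Phi}^{-1}(\bm{P}_{\bm{X}}^\perp\bm{U})^\top$, that is, $\bm{V}\bm{\Phi} = \bm{P}_{\bm{X}}^\perp\bm{U}$. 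Conversely, any $(\bm{\Phi},\bm{U})$ feasible for the reduced problem has invertible $\bm{\Phi}$ whenever $-\log|\bm{\Phi}|<\infty$. It therefore yields a feasible $\bm{B} = \bm{Q}\bm{\Sigma}\bm{\Phi}^{-\top}$, and the two parametrizations are mutually inverse.

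\textbf{Step 3 (objective).} Using $\bm{Q}^\top\bm{Q} = \bm{I}$, we compute
\[
\vol(\bm{B})^2 = \det(\bm{\Phi}^{-1}\bm{\Sigma}^2\bm{\Phi}^{-\top}) = |\bm{\Sigma}|^2\,|\bm{\Phi}|^{-2},
\]
so $\log\vol(\bm{B}) = \log|\bm{\Sigma}| - \log|\bm{\Phi}|$. Since the logarithm is monotone and $\log|\bm{\Sigma}|$ is a constant, minimizing $\vol(\bm{B})$ is equivalent to minimizing $-\log|\bm{\Phi}|$ over the corresponding feasible set. The bijection of Step 2 thus maps the set $\mathcal{G}$ of global optima of the reduced problem onto the optima of \eqref{opt:L1_constr}, via $\bm{B}_\star = \bm{Q}\bm{\Sigma}\bm{\Phi}_\star^{-\top}$ and $\bm{A}_\star = (\bm{X}_+ - \bm{B}_\star\bm{U}_\star^\top)(\bm{X}^\top)^\dagger$. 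This is exactly the claimed description of $\mathcal{G}_P$.
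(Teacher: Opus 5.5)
Your proposal is correct and follows essentially the same route as the paper. You eliminate $\bm{A}$ via $\bm{P}_{\bm{X}}^\perp$ (the paper packages this, together with the rank-$m$ fact from persistent excitation and $\ker\bm{B}_\natural = 0$, as Lemma~\ref{lemma:equiv_cond}), reparametrize $\bm{B} = \bm{Q}\bm{\Sigma}\bm{\Phi}^{-\top}$, and use $\vol(\bm{B}) = |\bm{\Sigma}|/|\bm{\Phi}|$. Identifying $\mathrm{img}(\bm{B}) = \mathrm{img}(\bm{Q})$ first, rather than the row space of $(\bm{P}_{\bm{X}}^\perp\bm{U})^\top$ as the paper does, is immaterial. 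One small slip: cancelling $\bm{Q}\bm{\Sigma}$ gives $\bm{\Phi}^{-\top}(\bm{P}_{\bm{X}}^\perp\bm{U})^\top = \bm{V}^\top$ (not $\bm{\Phi}^{-1}$), whose transpose is indeed $\bm{V}\bm{\Phi} = \bm{P}_{\bm{X}}^\perp\bm{U}$.
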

\begin{proof}
    See appendix.
\end{proof}

\begin{figure*}[ht]
    \centering
    \resizebox{\textwidth}{!}{\input{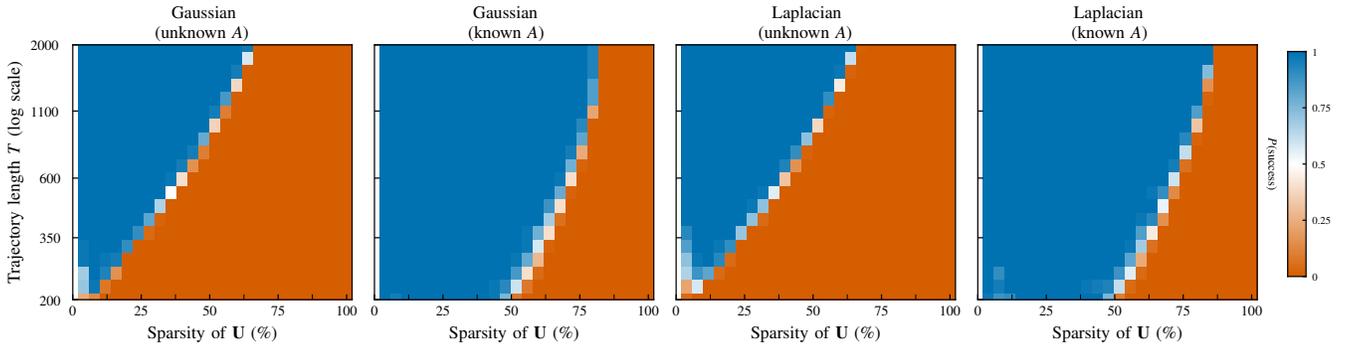}}
    \caption{Probability of successful recovery ($P(\text{success})$) as a function
        of sparsity fraction $s/m$ and trajectory length $T$
        for a  stable, and controllable system with $n=100$, $m=25$, and 50 trials per cell.}
    \label{fig:2d_heatmaps}
\end{figure*}

It is based on this form that we develop our algorithm. Denote the extended-real valued functions
\[
    f(\bm{\Phi}) = \begin{cases}
        -\log|\bm{\Phi}|,  & |\bm{\Phi}| \ne 0 \\
        \infty, & \text{o.w.}
    \end{cases}, \,\, \delta_{\mathcal{C}} = \begin{cases}
        0, & \max_{i} \|\bm{U}_i\|_1 \le T\mu \\
        \infty, & \text{o.w.}
    \end{cases}
\]
The standard augmented Lagrangian reads
\begin{align}
\begin{split}
    \mathcal{L}_\rho(\bm{U}, \bm{\Phi}, \bm{\Lambda}) &= f(\bm{\Phi}) + \delta_{\mathcal{C}}(\bm{U}) \\ &+ \langle \bm{\Lambda}, \bm{V}\bm{\Phi} - \bm{P}_{\bm{X}}^\perp\bm{U} \rangle + \frac{\rho}{2}\|\bm{V}\bm{\Phi} - \bm{P}_{\bm{X}}^\perp\bm{U} \|_F^2.
\end{split}
\end{align}
Under the regularization schedule $\rho^{(k)}$, we employ a Gauss-Seidel generalized ADMM, leveraging an additional proximal term to decouple the $\bm{U}$ update from the geometry of $\mathrm{img}(\bm{X})$:
\begin{align}
\begin{split}
    \bm{U}^{(k+1)} &= \arg\min_{\bm{U}} \mathcal{L}_{\rho^{(k)}} (\bm{U}, \bm{\Phi}^{(k)}, \bm{\Lambda}^{(k)}) \\&+ \frac{\rho^{(k)}}{2}\|(\bm{I} - \bm{P}_{\bm{X}}^\perp)(\bm{U} - \bm{U}^{(k)})\|_F^2 \\
    \bm{\Phi}^{(k+1)} &= \arg\min_{\bm{\Phi}} \mathcal{L}_{\rho^{(k)}}(\bm{U}^{(k+1)}, \bm{\Phi}, \bm{\Lambda}^{(k)}) \\
    \bm{\Lambda}^{(k+1)} &= \bm{\Lambda}^{(k)} + \rho^{(k)}(\bm{V}\bm{\Phi}^{(k+1)} - \bm{P}_{\bm{X}}^\perp\bm{U}^{(k+1)})
\end{split}
\end{align}
This results in the closed-form updates 
\begin{align}\label{eq:bliss_updates}
\begin{split}
    \bm{U}^{(k+1)}  &= \Pi_{\mathcal{C}}\left( \bm{V}\bm{\Phi}^{(k)} + \bm{\Lambda}^{(k)}/\rho^{(k)} + (\bm{I} - \bm{P}_{\bm{X}}^\perp) \bm{U}^{(k)} \right) \\
    \bm{\Phi}^{(k+1)} &= \mathrm{prox}_{f/\rho^{(k)}}\left( \bm{P}_{\bm{X}}^\perp\bm{U}^{(k+1)} - \bm{\Lambda}^{(k)}/\rho^{(k)} \right)
\end{split}
\end{align}
The projection $\Pi_\mathcal{C}$ onto the scaled $\ell_1$-ball is implemented column-wise. The proximal operator of $f$ can be evaluated in closed form in terms of the SVD \cite{parikh2014proximal}:
\begin{remark}
    Let $\bm{W} = \bm{U}_W \bm{\Sigma}_W \bm{V}_W^\top$ be an SVD. Then
    \[
        \mathrm{prox}_{\lambda f}(\bm{W}) = \frac{1}{2}\bm{U}_W \left( \bm{\Sigma}_W + \sqrt{\bm{\Sigma}_W^2 + 4\lambda\bm{I}}  \right)\bm{V}_W^\top.
    \]
\end{remark}
Within this update scheme, we denote the primal residual $\bm{R}^{(k)} := \bm{V}\bm{\Phi}^{(k)} - \bm{P}_{\bm{X}}^\perp \bm{U}^{(k)}$ and dual residual $\bm{S}^{(k)} := \rho^{(k)} \bm{V}(\bm{\Phi}^{(k+1)} - \bm{\Phi}^{(k)})$.
The $\rho^{(k)}$ schedule is implemented as the classical primal-dual residual balancing scheme detailed in \cite{boyd2011distributed} 
via an adaptation threshold $\alpha > 0$, and factor, $\tau > 0$:
\begin{align}\label{eq:adaptive_factor}
    \rho^{(k+1)} = \begin{cases}
        \tau \rho^{(k)}, & \|\bm{R}^{(k)}\|_F \ge \alpha \|\bm{S}^{(k)}\|_F \\
        \tau^{-1} \rho^{(k)}, & \|\bm{S}^{(k)}\|_F \ge \alpha \|\bm{R}^{(k)}\|_F \\ 
        \rho^{(k)}, & \text{ otherwise}
    \end{cases}
\end{align}
We note as well that there is a natural floor on admissible values of $\rho$, which arises from the proximal operator of $f$:
\begin{corollary}\label{crl:rho_bound}
    $\forall k>0, \sigma_{\min}(\bm{\Phi}^{(k)}) \ge 1/\sqrt{\rho^{(k)}}$
\end{corollary}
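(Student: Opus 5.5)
The corollary follows from the closed-form proximal map of $f$ (the preceding remark) together with the fact that, for $k>0$, $\bm{\Phi}^{(k)}$ is produced by the $\bm{\Phi}$-update in \eqref{eq:bliss_updates}. I will also need to be precise about which penalty parameter appears: $\bm{\Phi}^{(k)}$ is computed as $\mathrm{prox}_{f/\rho^{(k-1)}}(\bm{W})$ with $\bm{W} := \bm{P}_{\bm{X}}^\perp\bm{U}^{(k)} - \bm{\Lambda}^{(k-1)}/\rho^{(k-1)}$. So the natural bound is in terms of $\rho^{(k-1)}$, and the indexing convention has to be reconciled with the statement.

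The first step is to take an SVD $\bm{W} = \bm{U}_W\bm{\Sigma}_W\bm{V}_W^\top$ and apply the remark with $\lambda = 1/\rho^{(k-1)}$. This gives
\[
\bm{\Phi}^{(k)} = \bm{U}_W\,\tfrac{1}{2}\bigl(\bm{\Sigma}_W + (\bm{\Sigma}_W^2 + 4\lambda \bm{I})^{1/2}\bigr)\bm{V}_W^\top .
\]
This is itself an SVD, because the middle factor is diagonal with nonnegative entries $g(s) := \tfrac12\bigl(s+\sqrt{s^2+4\lambda}\bigr)$ for each singular value $s \ge 0$ of $\bm{W}$. Hence the singular values of $\bm{\Phi}^{(k)}$ are exactly the values $g(s_j)$.

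The second step is a scalar bound. The function $g$ is increasing in $s \ge 0$, so its minimum over $s\ge 0$ is $g(0) = \sqrt{\lambda}$. It follows that
\[
\sigma_{\min}(\bm{\Phi}^{(k)}) \ge \sqrt{\lambda} = 1/\sqrt{\rho^{(k-1)}}.
\]

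The only real obstacle is this index shift between $\rho^{(k-1)}$ and the $\rho^{(k)}$ in the statement. I would resolve it by reading the statement under the convention that the penalty used to produce $\bm{\Phi}^{(k)}$ is labeled $\rho^{(k)}$; under that convention the bound above is precisely the claim. Alternatively, the residual-balancing rule \eqref{eq:adaptive_factor} ties consecutive penalties together: $\rho^{(k)} \in \{\tau^{-1}\rho^{(k-1)}, \rho^{(k-1)}, \tau\rho^{(k-1)}\}$. If the statement must hold literally with $\rho^{(k)}$, this only yields the bound up to a factor of $\sqrt{\tau}$, and I would state that caveat explicitly.
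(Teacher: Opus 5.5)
Your argument is correct and is exactly the reasoning the paper indicates (the bound ``arises from the proximal operator of $f$''; no further proof is given): the closed-form prox sends each singular value $s$ of its argument to $\tfrac12\bigl(s+\sqrt{s^2+4/\rho}\bigr)\ge 1/\sqrt{\rho}$ whatever that argument is. In particular, the bound is unaffected by the fact that, dimensionally, the argument should really be $\bm{V}^\top(\bm{P}_{\bm{X}}^\perp\bm{U}^{(k)}-\bm{\Lambda}^{(k-1)}/\rho^{(k-1)})$, and your index caveat is legitimate: $\bm{\Phi}^{(k)}$ is produced with $\rho^{(k-1)}$, so the bound with $\rho^{(k)}$ follows literally whenever $\rho^{(k)}\ge\rho^{(k-1)}$ (e.g.\ in the initial constant-$\rho$ phase), and otherwise only up to the factor $\sqrt{\tau}$ you identify.
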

Therefore, to recover $(\bm{B}_\natural, \bm{U}_\natural)$ where $\|(\bm{U}_{\natural})_i\|_1=\mu T$, $\rho^{(k)}$ must satisfy $\lim \inf_{k} \rho^{(k)} \ge  \sigma_{\max}(( \bm{X}_+\bm{P}_{\bm{X}}^\perp)^\dagger\bm{B}_\natural)^2$.
\section{Experiments}

In this section, we implement the BLISS algorithm.
\paragraph{Data generation} 
We consider a system with state dimension $n = 100$ and input 
dimension $m = 25$. For each $t \in [T]$, we choose a random set 
$\mathcal{S} \subseteq [m]$ without replacement such that 
$|\mathcal{S}| = s$. In an i.i.d.\ manner, for each $j \in 
\mathcal{S}$ we sample $\mathbf{u}^{(t)}_{j} \sim \text{Gaussian} 
\text{ / Laplace}$. The matrix $\bm{A}_\natural$ is generated by scaling an i.i.d. Gaussian random matrix to have spectral radius $0.9$.
The matrix $\bm{B}_{\natural}$ is chosen to be a standard i.i.d. Gaussian random matrix.
We generate each system trajectory according to \eqref{eq:LTI_eqs} from an initial state $\bm{x}_0=0$.

\paragraph{Algorithm} 
We implement the BLISS algorithm in two variants: (a) direct updates from Equation~\ref{eq:bliss_updates} and (b) when we have oracle access to state transition matrix $\bm{A}_{\natural}$, a.k.a. complete dictionary learning. In this case, the proximal updates would resemble Equation \ref{eq:bliss_updates} except that the projection operator $\bm{P}_{\bm{X}}^\perp$ would be identity. 

Throughout the experiments, we 
set 
$\rho^{(k)} = 2\sigma_{\max}((\bm{X}_+\bm{P}_{\bm{X}}^\perp)^\dagger \bm{B}_\natural)^2$ for $k \in [200]$ unless specified (see 
Corollary~\ref{crl:rho_bound}), and for $k \geq 200$ we use the adaptive schedule described in Equation~\ref{eq:adaptive_factor} 
with $\alpha = 1.2$, and $\tau = 2$ .
Stopping criterion for the algorithm is 
either maximum run of
3000 iterations or when both primal and dual 
residuals are below $10^{-6}$, whichever is 
earliest.
To gauge recovery, we first normalize the columns of $\bm{B}_\star, \bm{U}_\star$ and $\bm{B}_\natural, \bm{U}_\natural$ such that $\|(\bm{U}_\star)_i\|_1 = \|(\bm{U}_\natural)_j\|_1=T\mu$ with the respective columns of $\bm{B}_\star, \bm{B}_\natural$ scaled reciprocally, and then solve a linear assignment problem on the correlation matrix between columns of $\bm{U}_\star, \bm{U}_\natural$ to determine an optimal permutation matrix $\bm{\Pi}$.
We then declare recovery to be successful if 
\begin{align*}
    \max\left(
    \frac{\|\bm{A}_\star - \bm{A}_\natural\|_F}{\|\bm{A}_\natural\|_F},
    \frac{\|\bm{B}_\star\bm{\Pi} - \bm{B}_\natural\|_F}{\|\bm{B}_\natural\|_F},
    \frac{\|\bm{U}_\star \bm{\Pi} - \bm{U}_\natural\|_F}{\|\bm{U}_\natural\|_F}\right) \le 0.01
\end{align*}
With this definition, we report the empirical probability of 
success of the BLISS algorithm for the data generated with 50 
different seeds.

\paragraph{Settings}
We consider two cases (a) blind system identification setting, 
we recover $(\bm{A}_{\natural}, \bm{B}_{\natural}, 
\bm{U}_{\natural})$ with access only to $\bm{X}$, and (b) dictionary 
learning setting, we recover $(\bm{B}_{\natural}, 
\bm{U}_{\natural})$ with access to $(\bm{X}, \bm{A}_{\natural})$. 
We present results by sweeping the sparsity level $s \in [m]$ 
and the trajectory length $T \in [200, 2000]$.

\paragraph{Discussion}
In Figure~\ref{fig:2d_heatmaps}, we observe clear evidence phase transition behavior mediating exact recovery of the ground truth $(\bm{A}_\natural, \bm{B}_\natural, \bm{U}_\natural)$ in both the known and unknown settings, via the BLISS algorithm. Degradation in recovery performance is sharp as the phase transition is crossed from the sparse and long trajectory regime to the dense and short in all cases considered. In Figure~\ref{fig:rho_convergence}, we observe convergence in all choices of $\rho^{(0)}$ considered; however, choices slightly larger than $\sigma_{\max}((\bm{X}_+ \bm{P}_{\bm{X}}^\perp)^\dagger\bm{B}_{\natural})^2$ (oracle initialization) lead to far more rapid convergence than an initialization of $\rho^{(0)} \sim 1$. For practical implementation with unknown $\bm{B}_\natural$, we suggest setting $\rho^{(0)} \propto \sigma_{\max}((\bm{X}_+ \bm{P}_X^\perp)^\dagger)^2$. Figure~\ref{fig:1d_stable_ctrl} shows that non-zero inputs 
following a Gaussian distribution can slightly tolerate denser 
inputs compared to those following a Laplace distribution. The 
same figure also shows that, as expected, knowing 
$\bf{A}_{\natural}$ reduces the problem to standard dictionary 
learning, which outperforms blind system identification. 

\begin{figure*}[t]
    \centering
    \includegraphics[width=\textwidth]{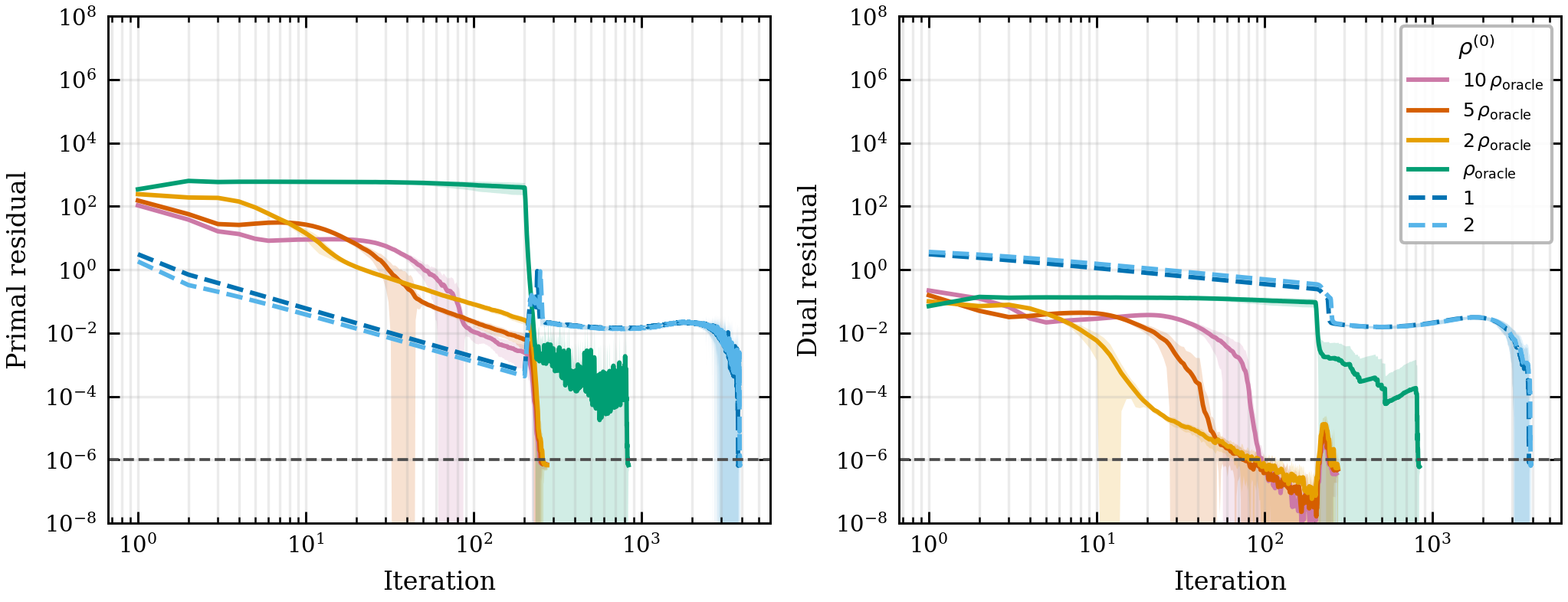}
    \caption{BLISS algorithm convergence for different initializations of the penalty
             parameter $\rho_0$. Solid lines correspond to oracle-scaled
             initializations $\rho^{(0)} = c\rho_{\text{oracle}}$ 
             for $c \in \{1, 2, 5, 10\}$,  where $\rho_{\text{oracle}} = \sigma_{\max}(( \bm{X}_+\bm{P}_{\bm{X}}^\perp)^\dagger \bm{B}_\natural)^2$, while dashed lines show fixed
             initializations $\rho^{(0)} \in \{1, 2\}$.
             Shaded regions indicate $\pm 1$ standard deviation over
             $50$ independent trials. The black dashed horizontal line indicates residual level of 1e-6.}
    \label{fig:rho_convergence}
\end{figure*}

\begin{figure}[h]
    \centering
    \resizebox{\columnwidth}{!}{\input{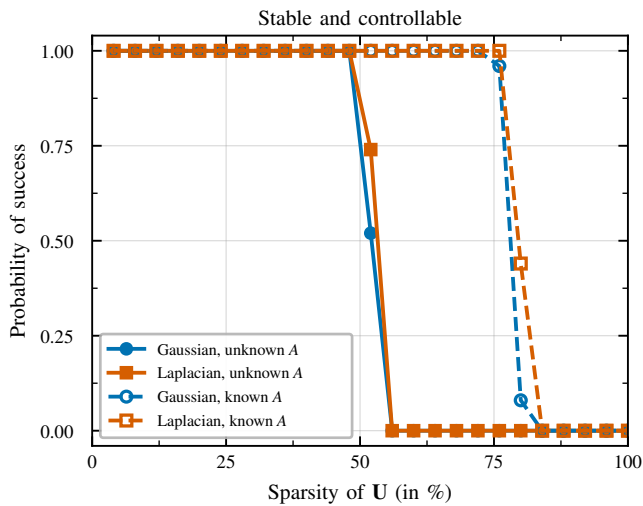}}

    \caption{Probability of successful recovery vs.\ input sparsity $s/m$
             for the stable, controllable case ($n=100$, $m=25$, $T=1000$,
             50 trials per point). Solid lines: unknown $A$ or blind system identification; dashed lines:
             known $A$ (oracle) or dictionary learning.}
    \label{fig:1d_stable_ctrl}
\end{figure}

\section{Conclusions}
In this work, we have formulated the blind identification of LTI systems with sparse inputs as a structured matrix factorization problem and proposed a corresponding algorithm that achieves exact recovery from finitely many samples. Building on connections to complete dictionary learning, we derived an interpretable sufficient condition for global identifiability. Empirically, we demonstrated that a principled application of the alternating direction method of multipliers recovers the ground-truth system across a range of trajectory lengths and sparsity levels. In future work, we aim to extend the volume optimization approach to general LTI systems, and provide a non-asymptotic statistical analysis of the persistent scattering condition. Finally, we plan to investigate the noisy and overcomplete regimes ($m>n$), where additional challenges arise due to analytical tractability of the associated MLE estimator, non-uniqueness, and increased model flexibility.

\bibliographystyle{IEEEtran}
\bibliography{refs_uday}

\appendices
\section{Proofs and Lemmata}
   \begin{proof}(Remark \ref{rmk:mle}) \label{proof:mle}
        We prove the case with $\bm{B}$ assumed square--the general complete $\bm{B} \in \mathbb{R}^{n \times m}$ case proceeds \textit{mutatis mutandis} by considering $p(\bm{x}^{(k+1)} \mid \bm{x}^{(k)}, \bm{A}, \bm{B})$ as a density with respect to the $m$-dimensional Hausdorff measure and replacing $|\det(\bm{B})|$ with $\vol(\bm{B})$ in the standard ICA derivation. Write the innovation process $\bm{r}^{(k)} = \bm{x}^{(k+1)} - \bm{A}\bm{x}^{(k)}$. It follows from classical ML estimation, as in ICA \cite{tharwat2021independent}, that
        \[
            p(\bm{X} \mid \bm{A}, \bm{B}) = \prod_{k \in [T]} p_{\bm{u}}(\bm{B}^{-1} \bm{r}^{(k)})|\det(\bm{B})|^{-1}
        \]
        Therefore, with the substitution $\bm{u}^{(k)} = \bm{B}^{-1}\bm{r}^{(k)}$, the MLE corresponds to minimizing the negative log of the above, divided by $T$:
        \[
            \min_{\bm{A}, \bm{B}, \bm{U}} \log|\det(\bm{B})| + \frac{1}{T\mu}\|\bm{U}\|_1 \text{ s.t. } \bm{X}_+ = \bm{A}\bm{X}^\top + \bm{B}\bm{U}^\top
        \]
        where we have written $p_{\bm{u}}(\bm{u}^{(k)}) \propto \exp(-\mu^{-1}\|\bm{u}^{(k)}\|_1)$. Denote $\bm{R} = \bm{X}_+ - \bm{A}\bm{X}^\top$. Under the change of variables $\bm{B}^{-\top} = \bm{\Phi}$, we have that $\bm{U} = \bm{R}^\top\bm{\Phi}$. At any stationary point of the resulting unconstrained problem in $\bm{\Phi}, \bm{A}$, it follows directly from a statement of the KKT conditions that $\|\bm{R}^\top \bm{\Phi}_i\|_1 = \|\bm{U}_i\|_1 = T\mu$. Therefore, any global optimum is feasible for the problem \eqref{opt:L1_constr}. If this is not globally optimal for \eqref{opt:L1_constr}, there exists another feasible point which achieves lower volume in $\bm{B}$ while remaining feasible, hence contradicting global optimality for the MLE. Suppose there was a global optimum of $(P)$ that is not globally optimal for the MLE, then either the constraint is not tight for some $\bm{U}_i$, which is a contradiction as in that case $(\bm{B}_i, \bm{U}_i) \to (\lambda^{-1} \bm{B}_i, \lambda\bm{U}_i)$ shrinks the objective while maintaining feasibility; or $\log|\det(\bm{B}_P)| + \frac{1}{T\mu}\|\bm{U}_{P}\|_1 > \log|\det(\bm{B}_{MLE})| + \frac{1}{T\mu}\|\bm{U}_{MLE}\|_1$ which implies $\log|\det(\bm{B}_P)| > \log|\det(\bm{B}_{MLE})|$, a contradiction. Therefore, \eqref{opt:L1_constr} is the MLE.
    \end{proof}

\begin{lemma}[Base Norm equivalence]\label{lemma:base_equiv}
    The pair $(\bm{x}, \bm{u})$ is persistently scattering if and only if there exists $\varepsilon > 0$ such that both:
    \begin{enumerate}
        \item $ \|\tilde{\bm{X}}\bm{w} + \tilde{\bm{U}}\bm{v}\|_1^2 \ge \varepsilon\|\bm{w}\|_2^2 + \|\bm{v}\|_2^2$
        \item $\|\tilde{\bm{X}}\bm{w} + \tilde{\bm{U}}\bm{v}\|_1^2 = \varepsilon\|\bm{w}\|_2^2 + \|\bm{v}\|_2^2 \iff \bm{w}=0, \|\bm{v}\|_0 \le 1$
    \end{enumerate}
\end{lemma}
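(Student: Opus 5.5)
The plan is to translate both conditions of the lemma into statements about support functions and use the standard duality between containment of convex sets and pointwise domination of support functions. For a nonsingular diagonal $\bm{D}$, the support function of the ellipsoid $\bm{D}\mathcal{B}_2$ is $\bm{h}\mapsto\|\bm{D}\bm{h}\|_2$. Both $\bm{D}\mathcal{B}_2$ and $\mathcal{Z}$ are compact and convex, so
\[
\bm{D}\mathcal{B}_2 \subseteq \mathcal{Z} \iff \sigma_{\mathcal{Z}}(\bm{w},\bm{v}) \ge \|\bm{D}(\bm{w},\bm{v})\|_2 \quad \forall (\bm{w},\bm{v}).
\]
I then characterize the contact set. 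Assume containment and take $\bm{p}=\bm{D}\bm{y}$ with $\|\bm{y}\|_2=1$. Then $\bm{p}\in\partial\mathcal{Z}$ iff some $\bm{h}\neq 0$ satisfies $\langle \bm{p},\bm{h}\rangle=\sigma_{\mathcal{Z}}(\bm{h})$. Combining this with the chain $\langle \bm{D}\bm{y},\bm{h}\rangle \le \|\bm{D}\bm{h}\|_2 \le \sigma_{\mathcal{Z}}(\bm{h})$ shows that both inequalities must be tight. Tightness of the first (Cauchy--Schwarz) inequality forces $\bm{y}=\bm{D}\bm{h}/\|\bm{D}\bm{h}\|_2$, and tightness of the second gives $\sigma_{\mathcal{Z}}(\bm{h})=\|\bm{D}\bm{h}\|_2$. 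This yields the key correspondence: contact points of $\bm{D}\mathcal{B}_2$ with $\partial\mathcal{Z}$ are exactly the points $\bm{D}^2\bm{h}/\|\bm{D}\bm{h}\|_2$, where $\bm{h}\ne 0$ ranges over the equality directions $\sigma_{\mathcal{Z}}(\bm{h})=\|\bm{D}\bm{h}\|_2$.

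For the direction ($\Leftarrow$), I take $\bm{D}=\mathrm{diag}(\sqrt{\varepsilon}\bm{I}_n,\bm{I}_m)$. Condition 1, after taking square roots, is exactly containment. By condition 2, the equality directions are $\bm{h}=(0,t\bm{e}^{(i)})$, and the correspondence maps these to the contact points $(0,\pm\bm{e}^{(i)})$. So the contact set equals $\{(0,\pm\bm{e}^{(i)})\}$. The reverse implication inside condition 2 is a direct computation: $\sigma_{\mathcal{Z}}(0,\bm{e}^{(i)})=\|\tilde{\bm{U}}_i\|_1=1$ by the $\ell_1$ normalization.

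For the direction ($\Rightarrow$), I start from a general diagonal $\bm{D}=\mathrm{diag}(\bm{D}_w,\bm{D}_v)$ and proceed in three steps.
\begin{enumerate}
\item Since $(0,\bm{e}^{(i)})=\bm{D}\bm{y}$ with $\|\bm{y}\|_2=1$ forces $|(\bm{D}_v)_{ii}|=1$, and sign flips do not change $\bm{D}\mathcal{B}_2$, I may assume $\bm{D}_v=\bm{I}$.
\item I set $\varepsilon=\min_j(\bm{D}_w)_{jj}^2$ and $\bm{D}'=\mathrm{diag}(\sqrt{\varepsilon}\bm{I},\bm{I})$. Then $\|\bm{D}'\bm{h}\|_2\le\|\bm{D}\bm{h}\|_2\le\sigma_{\mathcal{Z}}(\bm{h})$, which gives condition 1.
\item Any equality direction for $\bm{D}'$ is also one for $\bm{D}$, since it is squeezed between the two. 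By the correspondence and persistent scattering, $\bm{D}^2\bm{h}/\|\bm{D}\bm{h}\|_2=(0,\pm\bm{e}^{(i)})$, so $\bm{D}^2\bm{h}$ vanishes off one input coordinate. Because $\bm{D}$ is nonsingular and diagonal, $\bm{h}=(0,t\bm{e}^{(i)})$, i.e.\ $\bm{w}=0$ and $\|\bm{v}\|_0\le 1$. The case $\bm{h}=0$ is trivial.
\end{enumerate}

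The main obstacle I anticipate is the contact-point correspondence, specifically the claim that every boundary point of $\mathcal{Z}$ admits a supporting direction attaining $\sigma_{\mathcal{Z}}$. This uses the supporting hyperplane theorem, which requires $\mathcal{Z}$ to have nonempty interior; that holds here because containment of the full-dimensional ellipsoid $\bm{D}\mathcal{B}_2$ supplies it. The remainder of the argument is bookkeeping with the Cauchy--Schwarz equality case.
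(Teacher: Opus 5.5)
Your proposal is correct and is essentially the paper's argument. Both proofs fix the input block of $\bm{D}$ to the identity and translate containment into $\sigma_{\mathcal{Z}}(\bm{h})\ge\|\bm{D}\bm{h}\|_2$. Both then identify the contact points with $\bm{D}^2\bm{h}/\|\bm{D}\bm{h}\|_2$ over the equality directions, via the Cauchy--Schwarz equality case (the paper phrases this as strict convexity of the ellipsoid), and pass to $\mathrm{diag}(\sqrt{\varepsilon}\bm{I},\bm{I})$ with $\varepsilon$ the smallest squared diagonal entry of the state block. One bookkeeping note: the computation $\sigma_{\mathcal{Z}}(0,\bm{e}^{(i)})=1$ is what supplies the reverse implication of condition 2 in the $(\Rightarrow)$ direction, whereas in $(\Leftarrow)$, where you placed it, that implication is already a hypothesis.
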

\begin{proof}
    Suppose the pair $(\bm{x}, \bm{u})$ is persistently scattering. Since the columns of $\tilde{\bm{U}}$ have been scaled to unit L1 norm, it follows from H\"older's inqueality that $\{(0, \pm \bm{e}^{(i)})\}_{i \in [m]} \subset \partial \mathcal{Z}$. Therefore, the diagonal matrix in the definition of persistent scattering must satisfy $\bm{D} = \begin{bmatrix}
        \bm{D}_X & 0 \\ 0 & \bm{I}
    \end{bmatrix}$ for $\bm{D}_X$ nonsingular. The support function of this scaled ball satisfies $\sigma_{\bm{D}\mathcal{B}_2}(\bm{w}, \bm{v}) = \sup_{(\bm{a}, \bm{b}) \in \mathcal{B}_2}\langle \bm{D}\bm{a}, \bm{w} \rangle + \langle\bm{b}, \bm{v} \rangle = \sqrt{\|\bm{D}\bm{w}\|_2^2 + \|\bm{v}\|_2^2}$.
    The support function $\sigma_{\mathcal{Z}}(\bm{w}, \bm{v})$ satisfies:
    \begin{align*}
        \sigma_{\mathcal{Z}}(\bm{w}, \bm{v}) 
        &= \sup_{(\bm{a}, \bm{b}) \in \mathcal{Z}} \langle \bm{w}, \bm{a} \rangle + \langle\bm{v}, \bm{b} \rangle \\
        &= \sup_{\bm{z} \in \mathcal{B}_\infty}\langle \tilde{\bm{X}}^\top\bm{z}, \bm{w} \rangle + \langle \tilde{\bm{U}}^\top \bm{z}, \bm{v} \rangle \\
        &= \|\tilde{\bm{X}} \bm{w} + \tilde{\bm{U}}\bm{v} \|_1
    \end{align*}
    Since the support function is the gauge of the polar body, $\bm{D}\mathcal{B}_2 \subseteq \mathcal{Z} \iff \sigma_{\mathcal{Z}}(\bm{w}, \bm{v}) \ge \sigma_{\bm{D}\mathcal{B}_2}(\bm{w}, \bm{v}) \iff \|\tilde{\bm{X}}\bm{w} + \tilde{\bm{U}}\bm{v}\|_1^2 \ge \|\bm{D}_X\bm{w}\|_2^2 + \|\bm{v}\|_2^2$.
    We proceed to prove $\|\tilde{\bm{X}} \bm{w} + \tilde{\bm{U}}\bm{v}\|_1^2 = \|\bm{D}_X \bm{w}\|_2^2 + \|\bm{v}\|_2^2 \iff \bm{w} =0, \|\bm{v}\|_0 \le 1$ is equivalent to $\bm{D} \mathcal{B}_2 \cap \partial \mathcal{Z} = \{(0, \pm \bm{e}^{(i)})\}_{i \in [m]}$ as follows:
    \begin{itemize}
        \item $(\Leftarrow)$ Suppose $\sigma_{\mathcal{Z}}(\bm{y}) = \sigma_{\bm{D}\mathcal{B}_2}(\bm{y})$ for some $\bm{y} \ne 0$. Since $\bm{D}\mathcal{B}_2$ is strictly convex, there is a unique $\bm{p}_* \in \bm{D}\mathcal{B}_2$ satisfying $\langle \bm{p}_*, \bm{y} \rangle = \sigma_{\bm{D}\mathcal{B}_2}(\bm{y})$, which is $\bm{p}_* = \bm{D}^2\bm{y}/\|\bm{D}\bm{y}\|_2$. By hypothesis, $\sigma_{\mathcal{Z}}(\bm{y}) = \langle \bm{p}_*, \bm{y} \rangle$, so $\bm{p}_* \in \partial \mathcal{Z}$, and therefore assuming the boundary intersection condition implies $\bm{p}_* \in \{(0, \pm \bm{e}^{(i)})\}_{i \in [m]}$. Writing $\bm{y} = (\bm{w}, \bm{v})$, it follows by nonsingularity of $\bm{D}_X$ that $\bm{y} = \|\bm{D}\bm{y}\|_2\bm{D}^{-2}\bm{p}_* \implies \bm{w} = 0, \|\bm{v}\|_0 \le 1$. The reverse implication is furnished by homogeneity and $\ell_1$-normalization of the columns of $\tilde{\bm{U}}$.
        \item $(\Rightarrow)$ We show first that $\bm{D}\mathcal{B}_2 \cap \partial \mathcal{Z} \subseteq \{(0, \pm \bm{e}^{(i)}) \}$. Take $\bm{p} \in \bm{D} \mathcal{B}_2 \cap \partial \mathcal{Z}$. Since $\bm{p} \in \partial \mathcal{Z}$, $\exists \bm{y} \ne 0$ such that $\langle \bm{p}, \bm{y} \rangle = \sigma_{\mathcal{Z}}(\bm{y})$. Since $\bm{p} \in \bm{D}\mathcal{B}_2$, we have $\langle \bm{p}, \bm{y} \rangle \le \sigma_{\bm{D}\mathcal{B}_2}(\bm{y}) = \|\bm{D}\bm{y}\|_2 \le \sigma_{\mathcal{Z}}(\bm{y})$, where the last inequality follows by assuming $\bm{D}\mathcal{B}_2 \subseteq \mathcal{Z}$. Therefore all inequalities are tight, and $\sigma_{\mathcal{Z}}(\bm{y}) = \|\bm{D}\bm{y}\|_2$. By hypothesis, $\bm{y} = (0, \bm{v})$ with $\|\bm{v}\|_0 \le 1$. Since $\bm{D}\mathcal{B}_2$ is strictly convex, it follows that $\bm{p} = \bm{D}^2\bm{y}/\|\bm{D}\bm{y}\|_2$ is the unique maximizer in the definition of $\sigma_{\bm{D}\mathcal{Z}}(\bm{y})$, and hence $\bm{p} = (\bm{D}_X^2 0, \bm{v})/\|\bm{v}\|_2 = (0, \bm{v}/\|\bm{v}\|_2)$. Therefore the second component of $\bm{p}$ is unit norm and $1$-sparse, and so is a signed basis vector. 

        We now show $\{(0, \pm\bm{e}^{(i)})\}_{i \in [m]} \subseteq \bm{D}\mathcal{B}_2 \cap \partial \mathcal{Z}$. It is immediate that $(0, \pm \bm{e}^{(i)}) \subseteq \bm{D}\mathcal{B}_2$, since $\|\bm{D}^{-1}(0, \pm \bm{e}^{(i)})\|_2 = \|\bm{e}^{(i)}\|_2 = 1$. Furthermore, each $(0, \pm \bm{e}^{(i)}) \in \partial \mathcal{Z}$, since taking $\bm{y} = (0, \pm \bm{e}^{(i)})$ gives $\langle \bm{y}, \bm{y} \rangle = 1 = \|\tilde{\bm{U}}(\pm\bm{e}^{(i)})\|_1 = \sigma_\mathcal{Z}(\bm{y})$, so $\bm{y}$ achieves the support function in its own direction.
    \end{itemize}
    The persistent scattering condition is thus equivalent to the existence of $\bm{D}_X$ nonsingular such that
    \begin{enumerate}
        \item  $\|\tilde{\bm{X}} \bm{w} + \tilde{\bm{U}}\bm{v}\|_1^2 \ge \|\bm{D}_X \bm{w}\|_2^2 + \|\bm{v}\|_2^2$
        \item $\|\tilde{\bm{X}} \bm{w} + \tilde{\bm{U}}\bm{v}\|_1^2 = \|\bm{D}_X \bm{w}\|_2^2 + \|\bm{v}\|_2^2 \iff \bm{w} =0, \|\bm{v}\|_0 \le 1$
    \end{enumerate}
    The first condition is the support function characterization of $\bm{D} \mathcal{B}_2 \subseteq \mathcal{Z}$ and follows from definition. 
    Suppose $\bm{D}_X$ is any such matrix. It follows that by letting $\varepsilon = \sigma_{\min}(\bm{D}_X)^2$, these conditions imply that $\|\tilde{\bm{X}} \bm{w} + \tilde{\bm{U}}\bm{v}\|_1^2 \ge \varepsilon \|\bm{w}\|_2^2 + \|\bm{v}\|_2^2$, with equality implying $\bm{w} = 0$ and $\|\bm{v}\|_0 \le 1$. Taking $\bm{w} = 0$ and $\|\bm{v}\|_0 \le 1$ yields $\varepsilon\|\bm{w}\|_2^2 + \|\bm{v}\|_2^2 = \|\bm{D}_X\bm{w}\|_2^2 + \|\bm{v}\|_2^2 = \|\tilde{\bm{X}}\bm{w} + \tilde{\bm{U}}\bm{v}\|_1^2$ by the above. The equivalence follows by setting $\bm{D}_X = \sqrt{\varepsilon} \bm{I}$.
\end{proof}

\begin{lemma}[Lifting]\label{lemma:lifting}
    Suppose $X, U$ have columns of unit $\ell_1$ norm. Denote $\Omega(v) := \min_w \|Uv + Xw\|_1$. If
    \begin{enumerate}
        \item $\Omega(v) \ge \|v\|_2$ \label{c1}
        \item $\Omega(v) = \|v\|_2 \iff \|v\|_0 \le 1$ \label{c2}
        \item $\forall i \in [m],  (\Omega(e_i)=\|U_i + X w\|_1 \implies w = 0)$ \label{c3}
    \end{enumerate}
    Then there exists $\varepsilon > 0$ such that $\forall v, w,$ $\|Xw + Uv\|_1^2 \ge \|v\|_2^2 + \varepsilon \|w\|_2^2$
\end{lemma}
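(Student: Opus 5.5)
The plan is a compactness argument by contradiction, followed by a local analysis near the "tight" directions $(\pm e_i, 0)$, where piecewise linearity of the $\ell_1$ norm supplies the missing quantitative slack. Write $g(v,w) := \|Xw + Uv\|_1$ and $f(v,w) := g(v,w)^2 - \|v\|_2^2$. Both sides of the target inequality are $2$-homogeneous, so it suffices to prove it on the unit sphere $\mathbb{S} := \{\|v\|_2^2 + \|w\|_2^2 = 1\}$. By C\ref{c1}, $g(v,w) \ge \Omega(v) \ge \|v\|_2$, so $f \ge 0$ everywhere. Hence the claim fails only if there are $(v_k,w_k) \in \mathbb{S}$ with $f(v_k,w_k) < \|w_k\|_2^2/k$, and necessarily $w_k \ne 0$.

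First I identify where such a sequence can accumulate. Pass to a convergent subsequence $(v_k,w_k) \to (v,w) \in \mathbb{S}$. By continuity $f(v,w)=0$, so $\|v\|_2 = g(v,w) \ge \Omega(v) \ge \|v\|_2$, and all of these are equalities. Then C\ref{c2} forces $\|v\|_0 \le 1$. If $v=0$, then $\|w\|_2=1$ and $g(0,w)=0$; this contradicts $\Omega(e_1) \le g(e_1, tw) \to$ a value below $1$ for suitable $t$, or more directly it contradicts the positivity coming from C\ref{c3} via the sharpness bound below, so this case should be ruled out quickly. Otherwise $v = \lambda e_i$ with $\lambda \ne 0$, and $w$ attains $\Omega(v)$. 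By homogeneity and the symmetry $g(-v,-w)=g(v,w)$, C\ref{c3} gives $w = 0$. So every limit point is $(\pm e_i, 0)$, with $w_k \to 0$. It therefore suffices to show that $f(v,w) \ge c\|w\|_2^2$ on a neighborhood of each $(e_i,0)$; the case $-e_i$ follows by symmetry.

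For the local estimate I plan to use two polyhedral sharpness facts. (a) Because $g$ is polyhedral, the parametric problem $\min_w g(v,w)$ has a unique minimizer $w=0$ at $v=e_i$ by C\ref{c3}. Hoffman/sharp-minimum theory for polyhedral functions then gives $g(v,w) \ge \Omega(v) + c_1\|w - w^*(v)\|_2$ for $v$ near $e_i$, where $w^*(v)$ is a (piecewise linear, Lipschitz) selection of minimizers with $w^*(e_i)=0$. Because the minimum is sharp and the structure is polyhedral, I expect the optimal face to stay $\{0\}$ on a whole cone of $v$ around $e_i$, that is, $w^*(v) = 0$ there. Concretely, I would show that the linear piece of $g$ active at $(e_i,0)$ is active on a full cone in $v$, so that the sharp minimum persists. (b) Near $e_i$, C\ref{c1}--C\ref{c2} together with polyhedrality of $\Omega$ should give a linear gap $\Omega(v) \ge |v_i| + c_2\|v_{-i}\|_1$. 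The heuristic is that $\Omega$ is piecewise linear and equals $1$ at $e_i$, while $\|v\|_2 = |v_i| + O(\|v_{-i}\|^2)$ is smooth there, so any slope deficit would violate C\ref{c2} along the axis. Combining (a) and (b), $g(v,w) \ge \|v\|_2 + c_1\|w\|_2$, and hence $f \ge 2c_1\|v\|_2\|w\|_2 \ge c\|w\|_2$. On the sphere near $(e_i,0)$ this is at least $c\|w\|_2^2$, which contradicts the choice of the sequence.

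The main obstacle is step (a): establishing $w^*(v) \equiv 0$ near $e_i$, or more generally controlling $\|w^*(v)\|$ so that the loss $c_1\|w^*(v)\|$ is dominated by the slack $c_2\|v_{-i}\|_1$ from (b). If exact persistence fails, my fallback is to use Lipschitzness, $\|w^*(v)\| \le L\|v - e_i\|$, and to try to absorb it into (b) by writing $g(v,w) \ge \Omega(v) + c_1\|w\| - c_1 L\|v_{-i}\|$. For this I would need $c_2 > c_1 L$, which is not guaranteed; the cone-persistence argument for the active linear piece is the route I would try first.
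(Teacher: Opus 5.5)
Your compactness reduction is sound. On the joint sphere, every failing sequence must accumulate at $(\pm e_i,0)$, and this is a valid, arguably cleaner, way to organize what the paper does via $h(w)=\inf_v g(w,v)$ and boundedness of approximate minimizers. There is one slip in the case $v=0$: your first suggestion does not work, since $Xw=0$ gives $\|U_1+tXw\|_1=1$. But $Xw=0$ with $w\neq 0$ contradicts C3 outright, because $\|U_1+Xw\|_1=1=\Omega(e_1)$.

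The genuine gap is the local estimate near $(e_i,0)$. It is the only nontrivial step, you leave it open, and the route you favor is false in general. Take $T=4$, $m=2$, $n=1$, $U_1=\frac{1}{2}(1,1,0,0)^\top$, $U_2=\frac{1}{2}(0,0,1,1)^\top$, $X=\frac{1}{10}(3,-2,3,-2)^\top$. A direct computation gives:
\begin{itemize}
\item $\Omega(v)=v_1+\frac{2}{3}v_2$ for $v_1\ge v_2\ge 0$, attained only at $w=-\frac{5}{3}v_2$;
\item $\Omega(v)=\|v\|_1$ for $v_1\ge -v_2\ge 0$;
\item the remaining regions follow from $v\mapsto -v$ and from swapping the two halves of the rows.
\end{itemize}
So C1 and C2 hold, and C3 holds with $\|U_i+Xw\|_1\ge 1+\frac{2}{5}|w|$. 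Yet for $v=(1,\delta)$ with $0<\delta\le 1$, the unique minimizer is $w=-\frac{5}{3}\delta\neq 0$. Hence the optimal face at $(e_1,0)$ does not persist on any cone around $e_1$. Your fallback combines the two bounds additively and therefore needs $c_2>c_1L$, which nothing guarantees.

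The missing idea is to use your two local facts as alternatives rather than adding them. Invoke sharpness only at $v=e_i$, through the triangle inequality; then no minimizer selection or Hoffman/Lipschitz machinery is needed.

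By homogeneity write $v=v_i(e_i+\delta)$ with $\delta_i=0$, and put $s=w/v_i$. Since $U$ has unit-$\ell_1$ columns, $\|U\delta\|_1\le\|\delta\|_1$. This gives two bounds and a comparison:
\begin{itemize}
\item your (a) at $e_i$ gives $\|U(e_i+\delta)+Xs\|_1\ge 1+c_1\|s\|_2-\|\delta\|_1$;
\item your (b) gives $\|U(e_i+\delta)+Xs\|_1\ge\Omega(e_i+\delta)\ge 1+c_2\|\delta\|_1$;
\item $\|e_i+\delta\|_2\le 1+\frac{1}{2}\|\delta\|_1^2$.
\end{itemize}
Now split into two cases:
\begin{itemize}
\item If $\|\delta\|_1\le\frac{c_1}{2}\|s\|_2$, the first bound exceeds $\|e_i+\delta\|_2$ by at least $\frac{c_1}{4}\|s\|_2$ for small $s$.
\item Otherwise, the second bound exceeds it by at least $\frac{c_2}{2}\|\delta\|_1>\frac{c_1c_2}{4}\|s\|_2$ for small $\delta$.
\end{itemize}
Multiplying by $v_i$ gives $\|Uv+Xw\|_1-\|v\|_2\ge c\|w\|_2$ near $(e_i,0)$. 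This closes your argument with no relation between the constants required.

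This is exactly the paper's two-scale step. Linear growth of $\Omega$ forces $\|\hat v-e_j\|_2=O(1/t^2)$, so the $O(1/t^2)$ triangle-inequality perturbation is dominated by the $\varepsilon/t$ gain from sharpness.
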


\begin{proof}
    Define $g(w, v) := \|Uv + Xw\|_1^2 - \|v\|_2^2$. We have that $$g(w, v) = (\|Uv + Xw\|_1^2 - \Omega(v)^2) + (\Omega(v)^2 - \|v\|_2^2),$$ where the first term is non-negative by definition of $\Omega$, and the second is non-negative by condition \ref{c1}. We first show that $g(w, v) > 0$ for all $w \ne 0$. If $g(w, v) = 0$, then $\|Uv + Xw\|_1 = \|v\|_2$. Since $\Omega(v) \le \|Uv + Xw\|_1$, condition 1 implies $\|v\|_2 \le \Omega(v) \le \|v\|_2$ so $\Omega(v) = \|v\|_2$. Condition \ref{c2} then implies $\|v\|_0 \le 1$, so there exists $t, j$ such that $v = t e^{(j)}$. Therefore, $\|U t e^{(j)} + Xw\|_1 = \|v\|_2 = \Omega(te_j)$, so $w = 0$ by condition 3. Thus, $w \ne 0 \implies g(w, v) > 0$.

    We will next show that the function $h(w) = \inf_v g(w, v)$ is bounded away from zero on the unit sphere. To do so, we will show the existence of $M$ such that the infimum is achieved for $\|v\|_2 \le M$. So let $w^{(i)}$ be a convergent sequence of points on the sphere, $v_*^{(i)}$ any approximate minimizers satisfying $g(w^{(i)}, v_*^{(i)}) \le h(w^{(i)}) + 1/i$, and $\hat{v}_*^{(i)} := v_*^{(i)}/\|v_*^{(i)}\|_2$. Suppose $\|v_*^{(i_\ell)}\|_2 \to \infty$ for any subsequence $\ell \mapsto i_\ell$:
    
    If $\hat{v}_*^{(i_\ell)} \to \hat{v} \ne \pm e^{(j)}$, then by the 2-homogeneity of $g$, $g(w^{(i_\ell)}, v_*^{(i_\ell)}) = \|v_*^{(i_\ell)}\|_2^2  g(w^{(i_\ell)}/\|v_*^{(i_\ell)}\|_2, \hat{v}_*^{(i_\ell)}) \ge \|v_*^{(i_\ell)}\|_2^2(\Omega(\hat{v}_*^{(i_\ell)})^2-1) \to \infty$ where have used condition 2 to enforce that for all sufficiently large $\ell$, $\Omega(\hat{v}_*^{(i_\ell)}) - \|\hat{v}_*^{(i_\ell)}\|_2 \ge c > 0$, contradicting $g(w^{(i_\ell)}, v_*^{(i_\ell)}) \le h(w^{(i_\ell)}) + 1/i \le g(w^{(i_\ell)}, 0) + 1\le C+1$. 
    
    On the other hand, suppose $\hat{v}_*^{(i_\ell)} \to \pm e^{(j)}$. By way of contradiction, further suppose $g(w^{(i_\ell)}, v_*^{(i_\ell)}) \le C+1$ for all $\ell$. Set $t_\ell = \|v_*^{(i_\ell)}\|_2$. Set $s_\ell = w^{(i_\ell)}/t_\ell$. By 2-homogeneity, 
    \[
        g(w^{(i_\ell)}/t_{\ell}, v_*^{(i_\ell)}/t_\ell) =\|U\hat{v}_*^{(i_\ell)} + Xs_\ell\|_1^2 - 1 \le (C+1)/t_\ell^2
    \]
    so $\|U \hat{v}_*^{(i_\ell)} + X s_\ell \|_1 \le 1 + O(1/t_\ell^2)$. By condition (1) and (2), in a neighborhood of $e^{(j)}$ on the sphere, $\Omega(v)$ is uniquely minimized at $e^{(j)}$. Therefore, by piecewise linearity of $\Omega(v)$ (being the infimal projection of a piecewise linear function) and smoothness of the sphere, there exists a neighborhood of $e^{(j)}$ and a constant $c > 0$ such that $\Omega(v) \ge 1 + c\|e^{(j)} - v\|_2$ in this neighborhood. Therefore, for all sufficiently large $\ell$, $c\|e^{(j)} - \hat{v}_*^{(i_\ell)}\|_2 \le \Omega(\hat{v}_*^{(i_\ell)}) - 1 \le O(1/t_\ell^2)$. Now, using directly the piecewise linearity of $\|U v + X s\|_1$ and condition 3, there exists $\varepsilon > 0$ indepedent of $t_\ell$ such that $\|Ue^{(j)} + X s_\ell\|_1 \ge 1 + \varepsilon \|s_\ell\|_2 = 1 + \varepsilon/t_\ell$. By the triangle inequality,
    \begin{align*}
        \|U\hat{v}_*^{(i_\ell)} + Xs_\ell \|_1 &\ge \|U e^{(j)} + X s_\ell \|_1 - \|U\|_{1 \to 1}\|\hat{v}_*^{(i_\ell)} - e^{(j)}\|_1 \\
        &\ge 1 + \varepsilon/t_\ell -O(1/t_\ell^2)
    \end{align*}
    But this must be $\le 1 + O(1/t_\ell^2)$, so $\varepsilon/t_\ell \le O(1/t_\ell^2)$, giving $\varepsilon \le O(1/t_\ell) \to 0$ as $\ell \to \infty$, a contradiction.

    Therefore, there does not exist any subsequence such that $\|v_*^{(i_\ell)}\|_2 \to \infty$. It follows that the set of approximate minimizers is uniformly bounded by some $M$ satisfying $\|v_*(w)\|_2 \le M$ for any $\|w\|_2 =1$, and hence $\inf_v g(w, v) = \inf_{\|v\|_2 \le M} g(w, v)$ is obtained by compactness.
    
    Therefore, $h(w) = \inf_v g(w, v) = g(w, v_*(w)) > 0$. By compactness again, we have that

    \[
        \min_{\|w\|_2 = 1} h(w) =: \varepsilon > 0
    \]
    By the 2-homogeneity of $g$ in $(w, v)$, we may conclude that
    \[
        g(w, v) = \|w\|_2^2 \cdot g\left(\frac{w}{\|w\|_2}, \frac{v}{\|w\|_2}\right) \ge \|w\|_2^2 \cdot h\left( \frac{w}{\|w\|_2} \right) \ge \varepsilon \|w\|_2^2
    \]
    which proves that $\|Uv + Xw\|_1^2 \ge \|v\|_2^2 + \varepsilon\|w\|_2^2$, as desired.
\end{proof}

\begin{proof}(Proposition \ref{prop:support_equiv})\label{proof:support_equiv}
    By lemma \ref{lemma:base_equiv}, the persistent scattering condition is equivalent to the existence of $\varepsilon > 0$ such that
    \begin{enumerate}
        \item $\|\tilde{\bm{X}}\bm{w} + \tilde{\bm{U}}\bm{v}\|_1^2 \ge \varepsilon\|\bm{w}\|_2^2 + \|\bm{v}\|_2^2$
        \item $\|\tilde{\bm{X}}\bm{w} + \tilde{\bm{U}}\bm{v}\|_1^2 = \varepsilon\|\bm{w}\|_2^2 + \|\bm{v}\|_2^2 \iff \bm{w}=0, \|\bm{v}\|_0 \le 1$
    \end{enumerate}
    We will now prove equivalence of this characterization to the proposed statement by reciprocal implication.

    $(\Rightarrow)$ The first term implies that $\Omega_1(\bm{v}) \ge \|\bm{v}\|_2$ by minimization over $\bm{w}$. If $\Omega_1(\bm{v}) = \|\bm{v}\|_2$, then let $\bm{w}^*$ be such that $\|\tilde{\bm{X}} \bm{w}^* + \tilde{\bm{U}}\bm{v}\|_1 = \Omega_1(\bm{v})$. Then the first inequality yields \[
        \varepsilon\|\bm{w}^*\|_2^2 + \|\bm{v}\|_2^2 \le \|\tilde{\bm{U}}\bm{v} + \tilde{\bm{X}} \bm{w}^*\|_1^2 = \|\bm{v}\|_2^2
    \]
    It follows that $\bm{w}^* =0$. Since equality is achieved, the second condition implies $\|\bm{v}\|_0 \le 1$. Conversely, if $\|\bm{v}\|_0 \le 1$, we have that $\bm{v} = \alpha \bm{e}^{(j)}$ for some $j$. Therefore, taking $\bm{w} = 0$, by the definition of $\Omega_1$:
    \[
        \Omega_1(\bm{v}) \le \|\tilde{\bm{U}}\bm{v}\|_1 = \alpha = \|\bm{v}\|_2
    \]
    By the first inequality, we have that $\Omega_1(\bm{v}) = \|\bm{v}\|_2$. Lastly, fix $i \in [m]$. Suppose that $\|\tilde{\bm{X}} \bm{w} + \tilde{\bm{U}}\bm{v}\|_1 = \Omega_1(\bm{e}_i)$. Applying equality (1) at $\bm{v} = \bm{e}_i$ yields $1 = \Omega_1(\bm{e}_i)^2 = \|\tilde{\bm{X}}\bm{w} + \tilde{\bm{U}}_i\|_1^2 \ge \varepsilon\|\bm{w}\|_2^2 + 1$ and therefore, $\bm{w} = 0$.

    $(\Leftarrow)$ Now suppose all three conditions hold. By lemma \ref{lemma:lifting}, there exists a largest $\varepsilon_0 > 0$ such that $\forall v, \forall w, \|Xw + U v\|_1^2 \ge \|v\|_2^2 + \varepsilon\|w\|_2^2$. Now let $\varepsilon < \varepsilon_0$, then $\|Uv + Xw\|_1^2 \ge \varepsilon \|w\|_2^2 + \|v\|_2^2$. Suppose equality is obtained. Since we have selected $\varepsilon  < \varepsilon_0$, equality is impossible for any $w \ne 0$. So under equality, we must have that $\|X w + Uv\|_1^2 = \|Uv\|_1^2 = \|v\|_2^2$, so by conditions 1 and 2, $\|v\|_0 \le 1$. Conversely, suppose $\|v\|_0 \le 1$ and $w = 0$, then $v = te_j$ and so $\|Uv\|_1 = t=\|v\|_2$, and so since $w = 0$, $\|Uv + Xw\|_1^2 = \varepsilon\|w\|_2^2 + \|v\|_2^2$. 
\end{proof}

   \begin{lemma}\label{lemma:equiv_cond}
        Suppose that $\bm{X}_+ = \bm{A}_\natural\bm{X}^\top + \bm{B}_\natural\bm{U}_\natural^\top$ for $(\bm{x}, \bm{u}_\natural)$ persistently exciting and $\ker(\bm{B}_\natural) =\{0\}$. Denote $\bm{P} = \bm{I} - \bm{X}\bm{X}^\dagger$. Then \[
            \bm{X}_+ = \bm{A}\bm{X}^\top + \bm{B}\bm{U}^\top \iff \exists \bm{\Phi},
            \begin{cases}
                \bm{A} = (\bm{X}_+ - \bm{B}\bm{U}^\top)(\bm{X}^\top)^\dagger \\ 
                \bm{P}\bm{U} = \bm{P}\bm{U}_\natural\bm{\Phi} \\ \bm{B}_\natural = \bm{B}\bm{\Phi}^\top
            \end{cases}
        \]
    \end{lemma}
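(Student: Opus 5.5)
The plan is to prove both implications by direct linear algebra. Throughout I use the projector $\bm{P} = \bm{I} - \bm{X}\bm{X}^\dagger$ onto $\mathrm{img}(\bm{X})^\perp$, together with the identities
\[
\bm{X}^\top \bm{P} = 0, \qquad (\bm{X}^\top)^\dagger \bm{X}^\top = \bm{X}\bm{X}^\dagger = \bm{I} - \bm{P}.
\]
First I extract two consequences of persistent excitation, i.e.\ of $[\bm{X}, \bm{U}_\natural]$ having full column rank.

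\emph{Consequence (a).} $\bm{X}$ has full column rank. Hence $\bm{X}^\top$ has full row rank, and $\bm{X}^\top (\bm{X}^\top)^\dagger = \bm{I}_n$.

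\emph{Consequence (b).} $\bm{P}\bm{U}_\natural$ has full column rank $m$. Indeed, if $\bm{P}\bm{U}_\natural \bm{v} = 0$, then $\bm{U}_\natural \bm{v} \in \mathrm{img}(\bm{X})$. This gives a nontrivial kernel vector of $[\bm{X}, \bm{U}_\natural]$ unless $\bm{v} = 0$. So $\bm{U}_\natural^\top \bm{P}$ has a right inverse $\bm{G}$ with $\bm{U}_\natural^\top \bm{P}\bm{G} = \bm{I}_m$.

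Finally, multiplying the ground-truth equation on the right by $\bm{P}$ kills the $\bm{A}_\natural$ term, giving $\bm{X}_+\bm{P} = \bm{B}_\natural \bm{U}_\natural^\top \bm{P}$.

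\textbf{($\Rightarrow$).} Suppose $\bm{X}_+ = \bm{A}\bm{X}^\top + \bm{B}\bm{U}^\top$, and set $\bm{M} := \bm{X}_+ - \bm{B}\bm{U}^\top = \bm{A}\bm{X}^\top$.

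\emph{Step 1: the formula for $\bm{A}$.} Right-multiplying $\bm{M} = \bm{A}\bm{X}^\top$ by $(\bm{X}^\top)^\dagger$ and using (a) gives $\bm{A} = \bm{M}(\bm{X}^\top)^\dagger$.

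\emph{Step 2: a projected identity.} Right-multiplying by $\bm{P}$ gives $\bm{X}_+\bm{P} = \bm{B}\bm{U}^\top\bm{P}$. Combined with the ground-truth identity, this yields
\[
\bm{B}_\natural \bm{U}_\natural^\top \bm{P} = \bm{B}\bm{U}^\top \bm{P}.
\]

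\emph{Step 3: defining $\bm{\Phi}$.} Right-multiplying by $\bm{G}$ gives $\bm{B}_\natural = \bm{B}\bm{U}^\top\bm{P}\bm{G}$. So I define $\bm{\Phi} := \bm{G}^\top \bm{P}\bm{U}$, which gives $\bm{B}_\natural = \bm{B}\bm{\Phi}^\top$.

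\emph{Step 4: the constraint on $\bm{P}\bm{U}$ (main obstacle).} Substituting $\bm{B}_\natural = \bm{B}\bm{\Phi}^\top$ into Step 2 gives $\bm{B}\bm{\Phi}^\top \bm{U}_\natural^\top\bm{P} = \bm{B}\bm{U}^\top\bm{P}$. To conclude $\bm{P}\bm{U} = \bm{P}\bm{U}_\natural\bm{\Phi}$ I must cancel $\bm{B}$ on the left, but $\bm{B}$ is not assumed injective. The fix: $\ker(\bm{B}_\natural) = 0$ and $\bm{B}_\natural = \bm{B}\bm{\Phi}^\top$ force $\mathrm{rank}(\bm{B}) \ge m$. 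Since $\bm{B} \in \mathbb{R}^{n\times m}$, $\bm{B}$ therefore has full column rank and can be cancelled. Transposing then gives $\bm{P}\bm{U} = \bm{P}\bm{U}_\natural\bm{\Phi}$.

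\textbf{($\Leftarrow$).} Given $\bm{\Phi}$ with the three listed properties, again set $\bm{M} = \bm{X}_+ - \bm{B}\bm{U}^\top$. Then
\[
\bm{A}\bm{X}^\top = \bm{M}(\bm{X}^\top)^\dagger\bm{X}^\top = \bm{M}(\bm{I} - \bm{P}) = \bm{M} - \bm{M}\bm{P},
\]
so it suffices to show $\bm{M}\bm{P} = 0$. Using the ground-truth identity, $\bm{B}_\natural = \bm{B}\bm{\Phi}^\top$, and the transpose of $\bm{P}\bm{U} = \bm{P}\bm{U}_\natural\bm{\Phi}$ (with $\bm{P}$ symmetric):
\[
\bm{M}\bm{P} = \bm{B}_\natural\bm{U}_\natural^\top\bm{P} - \bm{B}\bm{U}^\top\bm{P} = \bm{B}\bm{\Phi}^\top\bm{U}_\natural^\top\bm{P} - \bm{B}\bm{\Phi}^\top\bm{U}_\natural^\top\bm{P} = 0.
\]
Hence $\bm{X}_+ = \bm{A}\bm{X}^\top + \bm{B}\bm{U}^\top$.

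I expect Step 4 (the rank argument for $\bm{B}$) and the correct bookkeeping of pseudoinverse identities to be the only delicate points; everything else is routine.
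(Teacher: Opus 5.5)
Your proof is correct and follows the paper's argument. Like the paper, you split the equation along $\bm{P}$ and $\bm{I}-\bm{P}$ to obtain the formula for $\bm{A}$ and the projected identity $\bm{B}_\natural\bm{U}_\natural^\top\bm{P} = \bm{B}\bm{U}^\top\bm{P}$, then use full column rank of $\bm{B}$ to produce $\bm{\Phi}$ and cancel $\bm{B}$. The only difference is cosmetic: you construct $\bm{\Phi} = \bm{G}^\top\bm{P}\bm{U}$ explicitly from a right inverse of $\bm{U}_\natural^\top\bm{P}$ and deduce $\mathrm{rank}(\bm{B}) = m$ from $\bm{B}_\natural = \bm{B}\bm{\Phi}^\top$, whereas the paper gets full column rank of $\bm{B}$ from $\mathrm{rank}(\bm{X}_+\bm{P}) = m$ and then asserts that such a $\bm{\Phi}$ exists.
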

    \begin{proof}(Lemma \ref{lemma:equiv_cond})
        Under the assumptions, $\bm{X}_+\bm{P} = \bm{B}_\natural \bm{U}_\natural^\top \bm{P}$, $\bm{X}$ is full column rank, and $\bm{U}_\natural^\top \bm{P}$ is full row rank (by persistency of excitation). Therefore, $\bm{X}_+\bm{P}$ is rank $m$.
        \begin{align*}
            &(\bm{X}_+ = \bm{A}\bm{X}^\top + \bm{B}\bm{U}^\top) \\
            &\iff (\bm{X}_+\bm{P} = \bm{B}\bm{U}^\top\bm{P}) \wedge (\bm{X}_+(\bm{I} - \bm{P}) = \bm{A}\bm{X}^\top + \bm{B}\bm{U}^\top(\bm{I} - \bm{P})) \\
            &\iff (\bm{B}_\natural\bm{U}_\natural^\top\bm{P} = \bm{B}\bm{U}^\top\bm{P}) \wedge ((\bm{X}_+ - \bm{B}\bm{U}^\top)(\bm{I} - \bm{P}) = \bm{A}\bm{X}^\top )\\
            &\iff (\bm{B}_\natural\bm{U}_\natural^\top\bm{P} = \bm{B}\bm{U}^\top\bm{P}) \wedge ((\bm{X}_+ - \bm{B}\bm{U}^\top)(\bm{X}^\top)^\dagger = \bm{A})
        \end{align*}
        Since $\bm{X}_+\bm{P} = \bm{B}_\natural\bm{U}_\natural^\top \bm{P} = \bm{B}\bm{U}^\top\bm{P}$, any feasible $\bm{B}$ must have full column rank. Therefore, we can write $\bm{B}_\natural = \bm{B}\bm{\Phi}^\top$ for some invertible $\bm{\Phi}$. So under the assumptions, we have
        \[
            \bm{B}_\natural\bm{U}_\natural^\top \bm{P} = \bm{B}\bm{U}^\top \bm{P} \iff (\bm{P}\bm{U}_\natural \bm{\Phi} = \bm{P}\bm{U})\wedge (\bm{B}_\natural = \bm{B}\bm{\Phi}^\top).
        \]
    \end{proof}


    \begin{proof}(Proposition \ref{prop:svd_opt})\label{proof:svd_opt}
        By lemma \ref{lemma:equiv_cond}, every feasible $\bm{A}, \bm{B}, \bm{U}$ for \eqref{opt:L1_constr} satisfies $\bm{X}_+\bm{P}_X^\perp = \bm{B}\bm{U}^\top \bm{P}_{X}^\perp$, with $\bm{A}$ uniquely determined as $\bm{A} = (\bm{X}_+ - \bm{B}\bm{U}^\top)(\bm{X}^\top)^\dagger$, and $\bm{B}$ necessarily full column rank. So \eqref{opt:L1_constr} is equivalent to
        \[
            \min_{\bm{B}, \bm{U}} \vol(\bm{B}) \text{ s.t. } \bm{B}(\bm{P}_X^\perp\bm{U})^\top = \bm{X}_+ \bm{P}_X^\perp, \, \max_{i}\|\bm{U}_i\|_1 \le \mu T
        \]
        Since $\bm{X}_+ \bm{P}_X^\perp = \bm{Q}\bm{\Sigma}\bm{V}^\top$ has rank $m$ and $\bm{B}$ is full column rank, the row space of $\bm{B}(\bm{P}_X^\perp \bm{U})^\top$ is equal to the row space of $(\bm{P}_X^\perp \bm{U})^\top$, which must therefore equal $\text{img}(\bm{V})$. Hence $\bm{P}_X^\perp\bm{U} = \bm{V}\bm{\Phi}$ for a unique invertible $\bm{\Phi}$, and the constraint becomes $\bm{B}\bm{\Phi}^\top \bm{V}^\top = \bm{Q}\bm{\Sigma}\bm{V}^\top \iff \bm{B}\bm{\Phi}^\top = \bm{Q}\bm{\Sigma} \iff \bm{B} = \bm{Q}\bm{\Sigma}\bm{\Phi}^{-\top}$. Since $\bm{Q}$ has orthonormal columns, 
        \[
            \bm{B}^\top\bm{B} = \bm{\Phi}^{-1}\bm{\Sigma}^2\bm{\Phi}^{-\top} \implies \vol(\bm{B}) = \sqrt{\det(\bm{B}^\top\bm{B})} = \frac{\det(\bm{\Sigma})}{|\det(\bm{\Phi})|}
        \]
        Since $\det(\bm{\Sigma})$ is a fixed constant, minimizing $\vol(\bm{B})$ is equivalent to minimizing $-\log|\det(\bm{\Phi})|$. Putting things together, we have established that \eqref{opt:L1_constr} is equivalent to 
        \[
            \min_{\bm{\Phi}, \bm{U}} -\log |\det(\bm{\Phi})| \text{ s.t. } \begin{cases}
                \bm{V}\bm{\Phi} = \bm{P}_X^\perp \bm{U} \\
                \max_{i} \|\bm{U}_i\|_1 \le \mu T
            \end{cases}
        \]
        where the global optima $(\bm{\Phi}_\star, \bm{U}_\star)$ are in bijective correspondence as indicated in the statement.
        
    \end{proof}

\end{document}